\DeclareMathOperator*{\argmin}{arg\,min}
\DeclareFontFamily{U}{mathx}{\hyphenchar\font45}
\DeclareFontShape{U}{mathx}{m}{n}{<-> mathx10}{}
\DeclareSymbolFont{mathx}{U}{mathx}{m}{n}
\DeclareMathAccent{\widebar}{0}{mathx}{"73}
\newcolumntype{Y}{>{\raggedleft\arraybackslash}X}
\newcolumntype{Z}{>{\centering\arraybackslash}X}
\pgfplotsset{compat=1.10}
\newtheorem{mylem}{Lemma}
\theoremstyle{definition}{
\newtheorem{mydef}{Definition}}
\theoremstyle{definition}{
}
\theoremstyle{definition}{
}
\newenvironment{customcond}[1]
  {\innercustomthm}
  {\endinnercustomthm}
\begin{document}
\title{Expected Shortfall LASSO}

\author{
Sander Barendse}
\affil{University of Amsterdam}

\date{This version: January 10, 2024 \\ First version: July 3, 2023}
% \date{This version: May 25, 2020. \\ Previous version: January 6, 2020.}
\maketitle

\begin{abstract}
\footnotesize
\noindent 
We propose an $\ell_1$-penalized estimator for high-dimensional models of Expected Shortfall (ES). The estimator is obtained as the solution to a least-squares problem for an auxiliary dependent variable, which is defined as a transformation of the dependent variable and a pre-estimated tail quantile. Leveraging a sparsity condition, we derive a nonasymptotic bound on the prediction and estimator errors of the ES estimator, accounting for the estimation error in the dependent variable, and provide conditions under which the estimator is consistent. Our estimator is applicable to heavy-tailed time-series data and we find that the amount of parameters in the model may grow with the sample size at a rate that depends on the dependence and heavy-tailedness in the data. In an empirical application, we consider the systemic risk measure CoES and consider a set of regressors that consists of nonlinear transformations of a set of state variables. We find that the nonlinear model outperforms an unpenalized and untransformed benchmark considerably.
\\
\bigskip
\\
\noindent
\textbf{Keywords:} expected shortfall, value-at-risk, high-dimensional expected shortfall regression, risk management 
\\
\noindent
\textbf{JEL classification:} C13, C14, C55, G32

\end{abstract}

\newpage

\section{Introduction}

Expected shortfall (ES) has quickly become the standard regulatory measure for market risk, as the Basel III framework of the Basel Committee on Banking Supervision has required banks to base their internal risk models on it since 2016 \citep{basel2016}. Defined as the expectation of the returns that exceed Value-at-Risk (VaR), another risk measure which corresponds to a tail quantile, the global regulatory body considers ES to provide a more prudent capture of tail risk compared to VaR, partly due to theoretical considerations outlined in \citet{artzner1999coherent} among others. Accurate forecasting of ES is therefore of great importance to financial institutions.

Models of ES are often parsimonious in nature due to the low signal-to-noise ratio in financial returns, which complicates estimation of large models. The call for parsimony is often amplified by the low frequency of return observations close to or in exceedance of the VaR, such that the researcher has few observations in their data set that are highly informative in the estimation of ES models. And yet, we know that the risk in returns is connected to  various fundamental variables (see Chapter 4 in \citet{andersen2013financial} and references therein). Moreover, dependence on these fundamental variables may be nonlinear, which, in turn, increase the amount of parameters in the model. For instance, the leverage effect states there is nonlinear relationship between stock market volatility and future returns, see, e.g. \citet{CAMPBELL1992}. The development of estimation methods that can handle large ES models is therefore important.

In this paper we develop a semiparametric estimator of linear ES models with many regressors, with the express purpose of estimating models that may rely on many (transformations of) fundamental variables. The estimator is an $\ell_1$-penalized least squares estimator (LASSO) that utilizes pre-estimated VaR predictions and leverages a sparsity condition in the model. With the VaR predictions, an auxiliary dependent variable is created that may be regressed on the set of explanatory variables to obtain an ES estimator, as shown by \citet{barendse2020}. We derive a nonasymptotic bound on the prediction and estimation errors, and provide conditions under which consistency is obtained. These conditions allow the amount of regressors $p$ to grow with and potentially be much larger than the sample size $T$. Although theoretical properties of LASSO estimators for time-series data have been developed in the literature, our estimator and results are novel because we explicitly account for the prediction error in the conditional variable that appears due to its dependence on the pre-estimated quantile predictions.

We use a penalty function that accounts for differences in the scale of the regressors which has not been studied in the literature before for least squares estimators, to our knowledge, and which is borrowed from the penalized quantile regression literature, see \citet{belloni2011}. Although accounting for scale differences is an appealing property on its own, we also prefer this penalty function as it means the penalization function of the coefficients in the VaR and ES estimation steps is equivalent. Indeed, we rely on the quantile regression estimator of \citet{belloni2023high} (from hereon BCMPW) to obtain the VaR predictions in the first step, although our results apply to any quantile estimator that satisfies the conditions.

As financial data is often dependent over time and heavy-tailed, we derive our results under conditions that allow for these properties. Specifically, we allow the data to be $\beta$-mixing sequences with finite moments of a certain order. To obtain our results, we develop a Fuk-Nagaev inequality for $\beta$-mixing sequences, which may be of independent interest. By using the Fuk-Nagaev inequality in conjuction with the same blocking strategy as utilized in BCMPW to obtain approximately independent blocks of data, we obtain rates on the model parameters that are directly comparable to those for the penalized quantile regression estimator in BCMPW. This result is closely related to the Fuk-Nagaev inequality for $\tau$-mixing sequences in \citet{babii2022machine}, which relies on an alternative blocking strategy, and the tail inequality for sequences that satisfy a sub-Weibull property in \citet{Wong2020}.

%BCMPW, instead, rely on a Hoeffding inequality, which applies due to a uniform boundedness condition on the score of the quantile regression estimator. This uniform boundedness condition does not apply to our least squares estimator.

We apply our method in a systemic risk analysis. In this analysis we generate predictions of the Conditional Expected Shortfall (CoES) to measure the risk spillover from the financial sector to the entire stock market. The CoES measure was introduced in \citet{adriancovar} as an extension of CoVaR, an alternative systemic risk measure which is commonly used but is less prudent than CoES from a similar argument to the one that favors ES over VaR. The estimation of CoES relies on three stages of quantile regression and one stage of ES regression. Like \citet{adriancovar} we condition on seven lagged fundamental variables, but we extend their set of fundamental variables by also considering nonlinear transformations. Specifically, we use each the Chebyshev polynomials to transform each of the fundamental variables, where the degree of polynomials used may increase with the sample size. Our results show that the penalized VaR and ES estimators outperform the unpenalized benchmark estimator that strictly uses the untransformed set of fundamental variables by a considerable margin in terms of out-of-sample prediction error. We observe that a moderate degree of Chebyshev polynomials is optimal in our sample and generates CoES measurements that are  more conservative than the benchmark and more responsive to news. Finally, we observe that the inclusion of nonlinear transformations of the fundamental variables results in a kind of leverage effect: risk predictions are less impacted by positive returns. 

In related literature, there are several papers that study the properties of LASSO estimators for linear mean models in a  time-series setting. These papers consider non-estimated conditional variables, and therefore do not apply to the ES estimation method proposed in this paper. Closely related are \citet{babii2022machine}, who study the properties of the group-LASSO estimator for $\tau$-mixing data, and \citet{Wong2020}, who consider the LASSO estimator for $\beta$-mixing sequences that satisfy a sub-Weibull property, which is stronger than the moment conditions we impose. \citet{medeiros20161} consider the adaptive LASSO estimator for models with martingale difference sequence errors and \citet{adamek2023lasso} extend their results to the near-epoch dependence error case. \citet{masini2022regularized} study the case with mixingales.
\citet{wu2016}, \citet{uematsu2019high}, and \citet{chernozhukov2021lasso} consider LASSO estimators for Bernoulli shift data. Finally, 
\citet{nardi2011autoregressive}, \citet{kock2015oracle}, and \citet{basu2015} develop LASSO estimators under more restrictive conditions on the data. \citet{hsu2008subset} and \citet{wang2007regression} develop LASSO estimators for scenarios in which the number of variables is smaller than the sample size.
In contemporaneous research, \citet{xuming2023} also develop an estimator of high-dimensional ES models based on the least-squares procedure in \citet{barendse2020}. Instead of a penalized estimator, they consider a Huberized robust estimator and develop their results under independence and sub-Gaussianity of the data.

The rest of the paper is organized as follows. Section \ref{sec:methods} develops the estimator and derives a nonasymptotic bound and and consistency results. Section \ref{sec:emp:results} contains the empirical study on CoES estimation. Section \ref{sec:simulation} includes a Monte Carlo simulation study. Section \ref{sec:fuknagaev} develops the Fuk-Nagaev inequality. The Appendix includes proofs and additional results for the empirical analysis and Monte Carlo study.

\section{Methods}\label{sec:methods} 

\textbf{Notation:} %For a random variable $X\in \mathbb{R}$, we denote the $L_q$ norm $\|X\|_{q,P} = (E|X|^q)^{1/q}$, with $q \geq 1$. 
Define, for $p\in \mathbb{N}$, $[p] = \{1,\ldots,p\}$. For $b \in \mathbb{R}^p$, we denote the $\ell_q$-norm as $\|b\|_q = (\sum_{i\in [p]} |b_i|^q)^{1/q}$, for $q \geq 1$, and $\|b\|_\infty = \max_{i\in[p]} |b_i|$ for $q=\infty$. For $a,b\in \mathbb{R}$, we denote $a\vee b = \max(a,b)$ and $a\wedge b = \min(a,b)$. For some vector $b \in \mathbb{R}^p$ and $V\subset [p]$ some index set, we denote by $b_V \in \mathbb{R}^p$ the vector for which ${b_V}_i = b_i$ if $i\in V$ and ${b_V}_i= 0 $ if $i \notin V$. Finally, for sequences $a_T$ and $b_T$, we write $a_T \lesssim b_T$ if there exists a constant $C > 0$ such that $a_T \leq C b_T$, for all $T\geq 1$, and $a_T \asymp b_T$ if $a_T \lesssim b_T$ and $b_T \lesssim a_T$.

\subsection{The model}

We consider the following data generating process for the conditional variable $Y_t$, given the regressor vector $X_t$:
\begin{align}\label{eq:linearmodel}
Y_t = X_t'\alpha^0(U_t),
\end{align}
where $\{U_t\}$ is a sequence of random errors satisfying $U_t | X_t \sim \text{Unif}(0,1)$, $\{X_t\}$ is a sequence of random vectors satisfying $X_t \in \mathcal{X}\subseteq\mathbb{R}^p$, and $\alpha^0(\cdot)$ is some measurable $p$-dimensional vector functional on $(0,1)$. To obtain increasing quantiles we impose, for any $x \in \mathcal{X}$, the function $x'\alpha^0(u)$ is strictly increasing in $u\in(0,1)$. In forecasting scenarios, $X_t$ contains lagged variables. Finally, we note that this model nests the location-scale model, see Section \ref{sec:simulation}.

In this model the $\tau$-quantile of $Y_t$ conditional on $X_t$, denoted $Q_t(\tau)$, has the functional form
\[Q_t(\tau) = X_t' \alpha^0(\tau),\]
for some quantile level $\tau\in(0,1)$. Moreover, the conditional ES, denoted $ES_t(\tau)$, has the functional form
\[ES_t(\tau) := E\left[Y_t \ | \ Y_t \leq Q_t(\tau), X_t \right] =  X_t' \gamma^0(\tau),\]
where $\gamma^0(\tau) = \int_0^\tau \alpha^0(u) du$ denotes the ES coefficient vector.

From the above specifications, we note that the DGP in (\ref{eq:linearmodel}) allows the regressors $X_t$ to influence the conditional variables $Y_t$ differently depending on the level of $Y_t$, since $\alpha^0(\tau)$ and $\gamma^0(\tau)$ are dependent on $\tau$. Moreover, the regressor vector $X_t$ may include nonlinear transformations of some subset of regressors $Z_t$. Hence, the DGP allows the regressors $Z_t$ to influence $Y_t$ differently depending on the magnitude of (the elements of) regressor vector $Z_t$. Indeed, conditional on $U_t = \tau$, $\frac{\partial Y_t}{\partial Z_t} = \frac{\partial X_t'}{\partial Z_t} \alpha^0(\tau)$, which may depend on $Z_t$ if $X_t$ contains nonlinear transformations of $Z_t$. As an example, the credit spread may influence the market return differently depending on the level of the credit spread and the level of the market return.

From now on we fix $\tau$ and remove reference to it for notational convenience. Specifically, we use $\alpha^0 = \alpha^0(\tau)$,$\gamma^0 = \gamma^0(\tau)$, $Q_t = Q_t(\tau)$, and $ES_t(\tau) = ES_t$. 

\subsection{The estimator}
Before we introduce the estimator of the ES coefficient vector $\gamma^0$, we introduce some additional notation to describe the data set.
For a sample of $T$ observations, let $Y = (Y_1,\ldots,Y_T)'$ denote the vector of conditional variables and $X = [X_1,\ldots,X_p]$ the regressor matrix, where $X_i = (X_{i1},\ldots,X_{iT})'$. Moreover, define the auxiliary conditional variable $\tilde{Y} = (\tilde{Y}_1,\ldots,\tilde{Y}_T)'$, where $\tilde{Y}_t = Q_t + \frac{1}{\tau} \mathds{1}(Y_t < Q_t) (Y_t - Q_t)$. A feasible counterpart to $\tilde{Y}$ is given by $\hat{Y} = (\hat{Y}_1,\ldots,\hat{Y}_T)'$ , where $\hat{Y}_t = \hat{Q}_t + \frac{1}{\tau} \mathds{1}(Y_t < \hat{Q}_t) (Y_t - \hat{Q}_t)$, with $\hat{Q}_t = X_t'\hat{\alpha}$ and $\hat{\alpha}$ some estimator of $\alpha^0$.

It can be shown that the ES parameters $\gamma^0$ optimize a (population) least squares problem with conditional variable $\tilde{Y}_t$ and regressor vector $X_t$. This follows from the regression errors $\varepsilon = (\varepsilon_1,\ldots,\varepsilon_T)' := \tilde{Y} - X'\gamma^0$ satisfying the condition $E[\varepsilon_t|X_t] = 0$ by definition of the expected shortfall above. See \citet{barendse2020} for details.

The above argument naturally leads to an estimator of $\gamma^0$ that solves a sample counterpart to the least squares problem using the feasible auxiliary conditional variables $\hat{Y}$:
\begin{align}\label{eq:ES-estimator}
  \hat{\gamma} := \argmin_{\gamma} \frac{1}{T}\|\hat{Y} - X\gamma\|_2^2 + \lambda \|\gamma\|_{1,T},
\end{align}
with $\|\gamma\|_{1,T} := \sum_{i=1}^p \hat{\sigma}_i |\gamma_i|$, $\hat{\sigma}_i^2 := \frac{1}{T}\sum_{t=1}^T X_{it}^2$. 
The second term in the optimization problem in (\ref{eq:ES-estimator}) is a penalty term that depends on the magnitude of the elements of the coefficient vector $\gamma$ and the penalty level $\lambda \geq 0$. 
The norm $\|\cdot\|_{1,T}$ is introduced in \citet{belloni2011} and used in their quantile regression problem with LASSO penalization for iid data. \citet{belloni2023high} also use it in their LASSO quantile regression for time-series data. It effectively standardizes the data by giving the coefficients corresponding to regressors with high dispersion a larger penalty than coefficients of regressors with low dispersion. We utilize the equivalent norm, such that the LASSO estimator of the quantile coefficients $\alpha^0$ uses the same penalization function as the LASSO estimator of the ES coefficients $\gamma^0$.

% Finally, we note that the deviation of the ES estimator from the quantile estimator, as measured by $\hat{\gamma} - \hat{\alpha}$, is determined by the periods in the sample for which we observe an exceedence of the quantile prediction, i.e. $Y_t < \hat{Q}_t$.
%This follows from the functional forms of the feasible auxiliary conditional variable $\hat{Y}_t$ and the optimization problem (\ref{eq:ES-estimator}). For accurate quantile predictions the rate of exceedences is approximately equal to $\tau$. We can therefore call $\lfloor \tau T\rfloor$ the effective sample size, as its the size of the subsample that can be used to differentiate the impact of $X_t$  of $\hat{\gamma}$.

\subsection{Nonasymptotic bound}
We derive a nonasymptotic bound on the prediction error $\frac{1}{T}\|X(\hat{\gamma}-\gamma^0) \|_2^2$ and the estimation error $\|\hat{\gamma}-\gamma^0\|_1$ under the following assumptions on the data.

\begin{customcond}{A}\label{cond:Variances}
It holds $\sigma_i^2 := E[X_j^2] = 1$, for all $i\in[p]$. Moreover, $\Sigma := E[X'X]$ has full rank.
\end{customcond}

The first part of Condition \ref{cond:Variances} is a simplifying assumption. In alternative cases, the prediction error $\|\hat{Y} - X\gamma\|_2^2$ does not change. The estimation error does change, but is bounded by the rate for the standardized case multiplied by $\widebar{\sigma^{-1}} := \max_{i\in[p]} \sigma_i^{-1}$, see Appendix \ref{appx:conditionvariances}.

Let $S_0 \subset \{1,\ldots,p\}$ denotes the sparsity set of $\gamma^0$, such that $\gamma_i = 0$ if $i\notin S_0$. The cardinality of $S_0$ is denoted by $s_0$.  Also define the cone $A = \{\gamma : \|\gamma_{S_0^c}\|_1 \leq C_0 \|\gamma_{S_0}\|_1\}$, for some constant $C_0 \geq 5/3$. 

\begin{customcond}{REC}\label{cond:REC}
There exists a constant $\phi_0^2 >0 $, such that for each $\gamma \in A$, $\|\gamma_{S_0}\|_1^2 \leq \frac{s_0}{\phi_0^2} \gamma' \Sigma \gamma$.
\end{customcond}

Condition \ref{cond:REC} is a restricted eigenvalue condition using the population covariance matrix $\Sigma$, see \citet{bickel2009simultaneous}. This condition is standard in the literature.

Lemma \ref{lem:nonasymp-bound} below provides nonasymptotic bounds on the estimation error $\|\hat{\gamma}-\gamma^0\|_1$ and the prediction error $\frac{1}{T}\|X(\hat{\gamma}-\gamma^0) \|_2^2$ that depend on the sparsity of the ES model and the penalty level, per usual, as well as on the prediction error contained in the feasible auxiliary conditional variables $\hat{Y}$. The bound holds on the intersection of the following events: $\mathcal{S} := \{\max_{i\in[p]}\left|\hat{\sigma}_i - \sigma_i\right| \leq \frac{1}{4}\}$; $\mathcal{T} := \{ \max_{i\in[p]} \frac{2}{T}\left|\varepsilon'X_j \right|  \leq \lambda_0\}$; and $\mathcal{U} := \{ \max_{i,j\in [p]} \left| \frac{1}{T}\sum_{t=1}^TX_{it} X_{jt} - E[X_{it} X_{jt}] \right|  \leq \lambda_1\}$, for positive constants $\lambda_0$ and $\lambda_1$.

% Finally, let $\lambda_1^2 := 4 \|\hat{Y} - \tilde{Y}\|_2^2/T$ measure the estimation error in the dependent variable.

\begin{mylem}\label{lem:nonasymp-bound}
  Let Conditions \ref{cond:Variances} and \ref{cond:REC} be satisfied. 
  On $\mathcal{T}  \cap \mathcal{S} \cap \mathcal{U}$, with $\lambda_0 \leq  \frac{6C_0 - 10}{15(1+C_0)} \lambda$ and $\lambda_1 \leq \frac{\phi_0^2}{2s_0 (1+C_0)^2}$, there exists some universal constant $C>0$ such that
\[\frac{1}{T}\|X(\hat{\gamma}-\gamma^0) \|_2^2  + \lambda \|\hat{\gamma}-\gamma^0\|_1 
\leq 
C \left( \lambda^2 s_0 \vee \frac{1}{T} \|\hat{Y} - \tilde{Y}\|_2^2\right).\]
\end{mylem}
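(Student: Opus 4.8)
The plan is to follow the standard LASSO "basic inequality" argument, adapted to account for the fact that we regress on the feasible auxiliary variable $\hat{Y}$ rather than the infeasible $\tilde{Y}$. First I would exploit optimality of $\hat{\gamma}$ in \eqref{eq:ES-estimator}: since $\hat{\gamma}$ minimizes the penalized objective, comparing its value at $\hat{\gamma}$ and at $\gamma^0$ gives
\[
\frac{1}{T}\|\hat{Y}-X\hat{\gamma}\|_2^2 + \lambda\|\hat{\gamma}\|_{1,T} \leq \frac{1}{T}\|\hat{Y}-X\gamma^0\|_2^2 + \lambda\|\gamma^0\|_{1,T}.
\]
Expanding the squared norms and writing $\delta := \hat{\gamma}-\gamma^0$, this rearranges to a bound of the form $\frac{1}{T}\|X\delta\|_2^2 \leq \frac{2}{T}(\hat{Y}-X\gamma^0)'X\delta + \lambda(\|\gamma^0\|_{1,T}-\|\hat{\gamma}\|_{1,T})$. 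The key is to split $\hat{Y}-X\gamma^0 = (\hat{Y}-\tilde{Y}) + (\tilde{Y}-X\gamma^0) = (\hat{Y}-\tilde{Y}) + \varepsilon$, so the cross term decomposes into an "estimation-error-in-$Y$" piece $\frac{2}{T}(\hat{Y}-\tilde{Y})'X\delta$ and the usual noise piece $\frac{2}{T}\varepsilon'X\delta$.

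Next I would control each piece. On $\mathcal{T}$, the noise piece satisfies $\frac{2}{T}|\varepsilon'X\delta| \leq \sum_i \frac{2}{T}|\varepsilon'X_i|\,|\delta_i| \leq \lambda_0 \|\delta\|_1$, and on $\mathcal{S}$ the empirical and population scales $\hat\sigma_i,\sigma_i=1$ are comparable (within a factor like $3/4$ to $5/4$), so $\|\delta\|_1$, $\|\delta\|_{1,T}$, $\|\delta_{S_0}\|_1$ etc. are all mutually comparable up to absolute constants; this is where the constant $C_0\geq 5/3$ and the precise bound $\lambda_0 \leq \frac{6C_0-10}{15(1+C_0)}\lambda$ enter, exactly as in \citet{belloni2011,belloni2023high}, to push $\delta$ into the cone $A$ after absorbing the $\lambda_0\|\delta\|_1$ term into $\lambda\|\delta_{S_0}\|_1$. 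For the new term, I would use Cauchy–Schwarz in the "empirical inner product" form: $\frac{2}{T}(\hat{Y}-\tilde{Y})'X\delta \leq 2\big(\frac{1}{T}\|\hat{Y}-\tilde{Y}\|_2^2\big)^{1/2}\big(\frac{1}{T}\|X\delta\|_2^2\big)^{1/2}$, and then apply Young's inequality $2ab \leq \eta a^2 + \eta^{-1}b^2$ with a suitable $\eta$ so that the $\frac{1}{T}\|X\delta\|_2^2$ factor is absorbed into (half of) the left-hand side, leaving a residual proportional to $\frac{1}{T}\|\hat{Y}-\tilde{Y}\|_2^2$.

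After these reductions I would be left with an inequality of the form $\frac{1}{2T}\|X\delta\|_2^2 + (\text{const})\lambda\|\delta\|_1 \lesssim \lambda\|\delta_{S_0}\|_1 + \frac{1}{T}\|\hat{Y}-\tilde{Y}\|_2^2$, valid with $\delta\in A$. Here I invoke Condition \ref{cond:REC}: $\|\delta_{S_0}\|_1 \leq \sqrt{s_0/\phi_0^2}\,(\delta'\Sigma\delta)^{1/2}$, and on $\mathcal{U}$ the empirical Gram matrix is close to $\Sigma$ uniformly entrywise by $\lambda_1$, so $\delta'\Sigma\delta \leq \frac{1}{T}\|X\delta\|_2^2 + \lambda_1\|\delta\|_1^2 \leq \frac{1}{T}\|X\delta\|_2^2 + \lambda_1(1+C_0)^2\|\delta_{S_0}\|_1^2$; the condition $\lambda_1 \leq \frac{\phi_0^2}{2s_0(1+C_0)^2}$ is precisely what lets me absorb the $\lambda_1\|\delta\|_1^2$ slack and conclude a usable restricted-eigenvalue bound $\|\delta_{S_0}\|_1^2 \lesssim \frac{s_0}{\phi_0^2}\cdot\frac{1}{T}\|X\delta\|_2^2$ on the empirical Gram matrix. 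Substituting this back, $\lambda\|\delta_{S_0}\|_1 \lesssim \lambda\sqrt{s_0}\,(\frac{1}{T}\|X\delta\|_2^2)^{1/2}$, and one more application of Young's inequality splits it into $\frac{1}{4T}\|X\delta\|_2^2 + C\lambda^2 s_0$. Collecting terms yields $\frac{1}{T}\|X\delta\|_2^2 + \lambda\|\delta\|_1 \lesssim \lambda^2 s_0 + \frac{1}{T}\|\hat{Y}-\tilde{Y}\|_2^2 \leq C(\lambda^2 s_0 \vee \frac{1}{T}\|\hat{Y}-\tilde{Y}\|_2^2)$, which is the claim.

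The main obstacle I anticipate is bookkeeping the constants so that everything closes: one must simultaneously (i) verify $\delta$ lands in the cone $A$ with $C_0\geq 5/3$ despite the extra $\frac{1}{T}\|\hat Y-\tilde Y\|_2^2$ term — which may require first handling the case where that term dominates separately (if $\frac{1}{T}\|\hat Y-\tilde Y\|_2^2$ is large the bound is trivial from the cone argument failing gracefully), and (ii) choose the Young's-inequality weights so the two separate absorptions into $\frac{1}{T}\|X\delta\|_2^2$ together consume at most its full coefficient. The precise numerical thresholds on $\lambda_0$ and $\lambda_1$ in the statement are exactly the output of this tuning, so the proof is mostly a careful tracking exercise rather than a conceptual leap; the one genuinely non-standard ingredient is the Cauchy–Schwarz/Young treatment of the $(\hat Y-\tilde Y)'X\delta$ term, which is routine but is what makes the error-in-dependent-variable contribution appear additively as $\frac{1}{T}\|\hat Y-\tilde Y\|_2^2$ in the final bound.
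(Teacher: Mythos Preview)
Your proposal is correct and follows essentially the same route as the paper: basic inequality, the decomposition $\hat Y - X\gamma^0 = (\hat Y-\tilde Y)+\varepsilon$, control of the noise piece on $\mathcal{T}$ via $\lambda_0\|\delta\|_1$, Cauchy--Schwarz on the $(\hat Y-\tilde Y)'X\delta$ term, comparability of $\|\cdot\|_1$ and $\|\cdot\|_{1,T}$ on $\mathcal{S}$, and the passage from population to sample restricted eigenvalue on $\mathcal{U}$ using the bound on $\lambda_1$.

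The only organizational difference is how the $(\hat Y-\tilde Y)$ contribution is handled relative to cone membership. You propose applying Young's inequality up front to absorb $\tfrac{1}{T}\|X\delta\|_2^2$ and then verifying $\delta\in A$, correctly flagging that a separate treatment is needed when $\tfrac{1}{T}\|\hat Y-\tilde Y\|_2^2$ is large. The paper makes this explicit via a three-case split on the relative sizes of $\lambda_2 T^{-1/2}\|X\delta\|_2$ (with $\lambda_2:=2T^{-1/2}\|\hat Y-\tilde Y\|_2$), $\lambda_0\|\delta\|_1$, and $T^{-1/2}\|X\delta\|_2$: cone membership is established and REC invoked only in the two cases where the $\hat Y-\tilde Y$ term is dominated, while in the remaining case the bound $\lesssim \lambda_2^2$ follows directly without ever entering the cone. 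Your sketch and the paper's proof are thus the same argument, with the paper's case split being the concrete realization of the ``handling the dominating case separately'' that you anticipated.
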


\begin{proof}
See Appendix \ref{appx:conditionvariances}.
\end{proof}

\subsection{Asymptotics}\label{sec:asymptotics}

In this section we provide conditions under which the prediction error converges to zero in probability and the estimator is consistent, i.e.~$\|X(\hat{\gamma}-\gamma^0) \|_2^2/T = o_P(1)$ and $\|\hat{\gamma}-\gamma^0\|_1 = o_P(1)$. The result is obtained by showing that the event $\mathcal{T}  \cap \mathcal{S} \cap \mathcal{U}$ appearing in Lemma \ref{lem:nonasymp-bound} occurs with high probability and that the bound in Lemma \ref{lem:nonasymp-bound} converges to zero under the imposed conditions.

\begin{customcond}{D}\label{cond:mixing} For all $i,j\in[p]$, the sequences $\{X_{it} \varepsilon_t  \}$ and $\{X_{it} X_{jt}\}$ are strictly stationary and $\beta$-mixing with coefficients bounded by $\bar\beta(t)$.
\end{customcond}

Condition \ref{cond:mixing} allows for time-series dependence in the data. A formal definition of $\beta$-mixing is provided in Section \ref{sec:fuknagaev}. In Condition \ref{cond:rates} below we impose a bound on $\bar\beta(t)$.

\begin{customcond}{MOM}\label{cond:moments}
For some $q \geq 2$ and all $i,j\in[p]$, $E|X_{i1}\varepsilon_1|^q <\infty$ and
$E|X_{i1}X_{j1}|^q <\infty$.
\end{customcond}

Condition \ref{cond:moments} is imposed to allow for $X_{i1}\varepsilon_1$ and $X_{i1}X_{j1}$, $i,j\in[p]$, to not be sub-Gaussian, sub-exponential random variables. In financial settings, the state variables and idiosyncratic errors may be heavy-tailed, such that the sub-Gaussianity or sub-exponentiality property is restrictive. 
%The condition requires moment conditions that are necessary for the Fuk-Nagaev inequality for the maxima of high-dimensional sums, which we use to find probability bounds on events $\mathcal{S}$, $\mathcal{T}$, and $\mathcal{U}$. 
Condition \ref{cond:moments} may be rewritten to allow for different order of moment bounds on the random variables $X_{i1}\varepsilon_1$ and $X_{i1}X_{j1}$, $i,j\in[p]$, at the cost of more involved notation. This condition is also weaker than imposing the sub-Weibull property of \citet{Wong2020}.

\begin{customcond}{R}\label{cond:rates}
Let $\bar\beta(k) \leq B k^{-\mu}$, for some constants $B\geq 0$ and $\mu > 2$. Moreover, let $a_T = \lceil T^{1/(1+\mu')}\rceil$ and $d_T = \lfloor T/(2a_T)\rfloor$ for some $\mu'\in(0,\mu)$. Finally, 
\begin{enumerate}
  \item $p d_T a_T^{-\mu} = o(1)$ if $B>0$;
  % \item $p a_T \rightarrow \infty$;
  % \item $p a_Td_T^{1-q} = o(1)$;
  % \item $p a_T \exp(-Cd_T) = o(1)$;
  % \item $p^2 (s_0+1)^q a_Td_T^{1-q} = o(1)$; implied by second item below
  % \item $p^2 a_T \exp(-Cd_T/(s_0+1)^2) = o(1)$;
  \item $s_0  \frac{(pa_T)^{2/q}}{d_T^{(q-1)/q}} = o(1)$; %take $\delta_T^{-1} = p a_T$
  \item $s_0  \frac{\sqrt{\log(pa_T)}}{\sqrt{d_T}} = o(1)$;
  \item $\|\hat{Y} - \tilde{Y}\|_2^2/T = o_P(1)$.
  \end{enumerate}
\end{customcond}

Finally, Condition \ref{cond:rates} imposes rates on the parameters $s_0$, $p$, $T$, and the prediction error $\|\hat{Y} - \tilde{Y}\|_2^2/T$.
Conditions \ref{cond:rates}.1 is a consequence of the time-series features of the data and need not be imposed for iid data.
%s the parameter $\mu \rightarrow \infty$, we approach the independent data scenario. In that case we may let $\mu' \rightarrow \infty$, such that $a_T$ approaches $1$ and $d_T$ approaches $T/2$. As a result, Condition \ref{cond:rates}.1 becomes less restrictive on the rate of $p$. 
Condition \ref{cond:rates}.2 reflects the polynomial tail part in the Fuk-Nagaev inequality for the maxima of high-dimensional sums and is required if $X_{i1}\varepsilon_1$ and $X_{i1}X_{j1}$ are not sub-Gaussian or sub-exponential random variables, for any $i,j\in[p]$. From Condition \ref{cond:moments} we observe that, if the value of $q$ increases, such that the random variables satisfy stricter moment conditions, Condition \ref{cond:rates}.2 becomes less restrictive on the rates of $s_0$ and $p$. Condition \ref{cond:rates}.3 is the standard exponential rate imposed in the LASSO problem if $X_{i1}\varepsilon_1$ and $X_{i1}X_{j1}$ are iid sub-Gaussian or sub-exponential random variables, for each $i,j\in[p]$. Finally, Condition \ref{cond:rates}.4 requires that the estimation error in the auxiliary conditional variable converges to zero in probability.

Condition \ref{cond:rates} introduces the terms $a_T$ and $d_T$, which depend on $T$ and a parameter $\mu'$. Moreover, the parameter $\mu$ describes the dependence in the data and is large if dependence is low. The value of $\mu'$ may be chosen close to $\mu$, such that $d_T$ may be set approximately equal to $T$ if dependence is low. Given the choice of penalty parameter $\lambda$ below, there is a trade-off between the convergence rates of the estimator and predictions (see Lemma \ref{lem:nonasymp-bound}) and the rates of the parameters. The convergence rate improves for large $d_T$ and small $a_T$, such that we like to choose $\mu'$ large. On the other hand, Condition \ref{cond:rates}.1 restricts $p$ more for small $d_T$ and large $a_T$, such that we like to choose $\mu'$ small. This trade-off is mediated by large $\mu$ (low independence), since it reduces the need for small $\mu'$ in Condition \ref{cond:rates}.1.

As an example, take the high dependence case $\mu = 3$. If we choose $\mu' = 1$, then Condition \ref{cond:rates}.1 requires $p = o(T^{\mu-\mu'}) = o(T^2)$. Moreover, Condition \ref{cond:rates}.2 requires $s_0^q p^2 T^{2-\mu'(q+1)/(1+\mu')} = s_0^q p^2 T^{2-(q+1)/2} = o(1)$. On the other hand, if we choose $\mu' = 2$, we require $p = o(T^{\mu-\mu'}) = o(T)$ and $s_0^q p^2 T^{2-2(q+1)/3} = o(1)$, such that the trade-off between parameter rates and convergence rates is clearly visible. From the example, it is also clear that less dependence ($\mu \gg 2$) or less heavy-tailedness ($q \gg 2$) in the data will improve parameter rates and convergence rates.
 
Lemma \ref{lem:asymp} below gives a consistency result on the estimation and prediction errors, for penalty levels that satisify
\[\lambda \asymp \frac{(pa_T)^{2/q}}{d_T^{(q-1)/q}}\vee \frac{\sqrt{\log(pa_T)}}{\sqrt{d_T}}.\]

\begin{mylem}\label{lem:asymp} Under the conditions of Lemma \ref{lem:nonasymp-bound} and Conditions \ref{cond:mixing}, \ref{cond:moments}, and \ref{cond:rates}, it follows
% with probability approaching 1, 
that 
\[\frac{1}{T}\|X(\hat{\gamma}-\gamma^0) \|_2^2 = o_P(1)\ \ \text{ and } \ \ \|\hat{\gamma}-\gamma^0\|_1 = o_P(1).\] 
\end{mylem}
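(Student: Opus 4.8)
The plan is to combine the nonasymptotic bound of Lemma \ref{lem:nonasymp-bound} with high-probability control of the events $\mathcal{S}$, $\mathcal{T}$, and $\mathcal{U}$, and then verify that the resulting bound vanishes under Condition \ref{cond:rates}. The structure is thus: (i) show $P(\mathcal{S} \cap \mathcal{T} \cap \mathcal{U}) \to 1$ for a suitable choice of $\lambda_0$, $\lambda_1$; (ii) check that the hypotheses of Lemma \ref{lem:nonasymp-bound} linking $\lambda_0,\lambda_1$ to $\lambda$ are met for the stated penalty level; (iii) conclude that on this event the right-hand side $C(\lambda^2 s_0 \vee \|\hat{Y}-\tilde{Y}\|_2^2/T)$ is $o_P(1)$, which forces both $\|X(\hat\gamma-\gamma^0)\|_2^2/T$ and $\lambda\|\hat\gamma-\gamma^0\|_1$ to be $o_P(1)$; since $\lambda\to 0$ one must be slightly careful, but the estimation-error term is bounded directly in Lemma \ref{lem:nonasymp-bound} (the left-hand side is $\tfrac1T\|X(\hat\gamma-\gamma^0)\|_2^2 + \lambda\|\hat\gamma-\gamma^0\|_1$, so $\lambda\|\hat\gamma-\gamma^0\|_1 = o_P(1)$ only gives consistency of $\|\hat\gamma-\gamma^0\|_1$ after dividing, hence I would re-derive from the cone/REC argument that $\|\hat\gamma-\gamma^0\|_1 \lesssim \sqrt{s_0}\cdot(\text{prediction error})^{1/2}/\phi_0 + \lambda s_0/\phi_0^2$ and check that this is $o_P(1)$ termwise).

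The heart of the argument is step (i), the concentration bounds. For $\mathcal{T}$ and $\mathcal{U}$ I would apply the Fuk-Nagaev inequality for $\beta$-mixing sequences developed in Section \ref{sec:fuknagaev}, together with the blocking strategy that splits the $T$ observations into $a_T$ groups of $d_T$ approximately independent blocks. Applied to each coordinate $X_{it}\varepsilon_t$ (for $\mathcal{T}$) and each pair $X_{it}X_{jt} - E[X_{it}X_{jt}]$ (for $\mathcal{U}$), under Condition \ref{cond:moments} ($q$-th moments) and Condition \ref{cond:rates}.1 (which controls the $\beta$-mixing remainder $p d_T a_T^{-\mu}$), the Fuk-Nagaev bound gives a polynomial-tail term of order $(p a_T)^{2/q}/d_T^{(q-1)/q}$ and a Gaussian-tail term of order $\sqrt{\log(p a_T)}/\sqrt{d_T}$ after a union bound over the $p$ (resp. $p^2$) coordinates. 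Choosing $\lambda_0 \asymp \lambda_1 \asymp (pa_T)^{2/q}/d_T^{(q-1)/q} \vee \sqrt{\log(pa_T)}/\sqrt{d_T}$ makes $P(\mathcal{T}^c) \to 0$ and $P(\mathcal{U}^c)\to 0$. The event $\mathcal{S}$, controlling $\max_i|\hat\sigma_i - \sigma_i|$, follows from $\mathcal{U}$ (or a direct application of the same inequality to $X_{it}^2 - 1$) since $|\hat\sigma_i^2 - 1| \leq \lambda_1 = o(1)$ implies $|\hat\sigma_i - 1| \leq \tfrac14$ eventually.

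Step (ii) is a matter of bookkeeping: the conditions $\lambda_0 \leq \tfrac{6C_0-10}{15(1+C_0)}\lambda$ and $\lambda_1 \leq \tfrac{\phi_0^2}{2s_0(1+C_0)^2}$ hold for the stated $\lambda \asymp (pa_T)^{2/q}/d_T^{(q-1)/q} \vee \sqrt{\log(pa_T)}/\sqrt{d_T}$ because $\lambda_0 \asymp \lambda$ (so a large enough implied constant in $\lambda$ works), and because $\lambda_1 s_0 = s_0 \big((pa_T)^{2/q}/d_T^{(q-1)/q} \vee \sqrt{\log(pa_T)}/\sqrt{d_T}\big) = o(1)$ is exactly Conditions \ref{cond:rates}.2 and \ref{cond:rates}.3. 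Then in step (iii), $\lambda^2 s_0 = s_0\lambda \cdot \lambda = o(1)\cdot o(1) = o(1)$ (using $s_0\lambda = o(1)$ from \ref{cond:rates}.2--3 and $\lambda = o(1)$), while $\|\hat{Y}-\tilde{Y}\|_2^2/T = o_P(1)$ is Condition \ref{cond:rates}.4 directly; hence the bound in Lemma \ref{lem:nonasymp-bound} is $o_P(1)$, giving the prediction-error claim immediately and, via the re-derived $\ell_1$ bound $\|\hat\gamma-\gamma^0\|_1 \lesssim s_0\lambda/\phi_0^2 + \sqrt{s_0}\,\|X(\hat\gamma-\gamma^0)\|_2/(\sqrt{T}\phi_0)$ on the cone, the consistency claim. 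The main obstacle I anticipate is the careful application of the Fuk-Nagaev inequality through the blocking scheme — in particular ensuring that the union bound over $p$ (or $p^2$) coordinates, the block count $a_T$, and the $\beta$-mixing coupling error assemble into precisely the two terms appearing in the penalty level and in Conditions \ref{cond:rates}.1--3, with no hidden dependence on $p$ slipping past the stated rates.
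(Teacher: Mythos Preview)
Your proposal is correct and follows essentially the same approach as the paper: control $P(\mathcal{S}\cap\mathcal{T}\cap\mathcal{U})$ via the Fuk--Nagaev inequality of Lemma~\ref{lem:fuknagaev} under the blocking strategy (packaged in the paper as the auxiliary Lemma~\ref{lem:nonasymp-prob}), then combine with Lemma~\ref{lem:nonasymp-bound} and check that both the bound and the complementary probabilities vanish under Condition~\ref{cond:rates}. The only cosmetic differences are that the paper fixes $\lambda_1$ directly at the constraint level $\asymp 1/s_0$ (and handles $\mathcal{S}$ by a separate Mean Value Theorem step on $\hat\sigma_i^2$) rather than tying $\lambda_1$ to the rate $\lambda$ as you do, and the $\ell_1$-consistency subtlety you flag is not addressed explicitly in the paper.
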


\subsection{Quantile bound}

The results in Lemma \ref{lem:asymp} are established under the condition that the prediction error in the auxiliary variables converges to zero in probability, i.e.~$\|\hat{Y} - \tilde{Y}\|_2^2/T = o_P(1)$. In this section we first show that a nonasymptotic bound of $\|\hat{Y} - \tilde{Y}\|_2^2/T$ depends on the quantile level $\tau$ and the prediction error of the quantile predictions $\|\hat{Q} - Q\|_2^2/T$, where $\hat{Q} = (\hat{Q}_1,\ldots,\hat{Q}_T)'$ denotes a sample of quantile predictions and $Q = (Q_1,\ldots,Q_T)'$ denotes the sample of true quantiles. Second, we provide sufficient conditions under which $\|\hat{Y} - \tilde{Y}\|_2^2/T = o_P(1)$, if the quantile predictions $\hat{Q}$ are generated by the quantile regression estimator of BCMPW.

The upper-bound on $\|\hat{Y} - \tilde{Y}\|_2^2/T$ given in Lemma \ref{lem:augregress-quantile-bound} below increases as we consider values of the quantile level $\tau$ that are closer to zero. Such values are relevant in risk management, where $\tau$ is often set to $2.5\%$ or $5\%$. It also shows that the bound is dependent on the choice of quantile predictions $\hat{Q}$. Given that the prediction error $\|\hat{Q} - Q\|_2^2/T$ usually depends on the amount of regressors $p$, it follows that a trade-off exists between $p$ and $\tau$ in terms of the bounds on the estimation and prediction errors in Lemma \ref{lem:nonasymp-bound}.

\begin{mylem}\label{lem:augregress-quantile-bound}
$\|\hat{Y} - \tilde{Y}\|_2^2/T \leq \left(1+\frac{1}{\tau}\right)^2 \|\hat{Q} - Q\|_2^2/T.$
\end{mylem}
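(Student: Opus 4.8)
The plan is to reduce the statement to a pointwise (in $t$) Lipschitz estimate and then sum. The key observation is that $\tilde{Y}_t$ and $\hat{Y}_t$ are the images of $Q_t$ and $\hat{Q}_t$ under one and the same $t$-dependent map. Using the elementary identity $\mathds{1}(Y_t < q)(Y_t - q) = -\big((q - Y_t)\vee 0\big)$, I would define $g_t(q) := q - \tfrac{1}{\tau}\big((q - Y_t)\vee 0\big)$, so that $\tilde{Y}_t = g_t(Q_t)$ and $\hat{Y}_t = g_t(\hat{Q}_t)$, and hence $\hat{Y}_t - \tilde{Y}_t = g_t(\hat{Q}_t) - g_t(Q_t)$.

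Next I would bound the Lipschitz constant of $g_t$. The map $q \mapsto q$ is $1$-Lipschitz, and $q \mapsto (q - Y_t)\vee 0$ is $1$-Lipschitz (it is the positive part of an affine function of slope one), so $q \mapsto \tfrac{1}{\tau}\big((q - Y_t)\vee 0\big)$ is $\tfrac{1}{\tau}$-Lipschitz; by the triangle inequality for Lipschitz seminorms, $g_t$ is $\big(1 + \tfrac{1}{\tau}\big)$-Lipschitz. Therefore, for every $t \in [T]$,
\[ |\hat{Y}_t - \tilde{Y}_t| \;\le\; \Big(1 + \frac{1}{\tau}\Big)\,|\hat{Q}_t - Q_t|. \]
Squaring this inequality, summing over $t = 1,\ldots,T$, and dividing by $T$ yields $\|\hat{Y} - \tilde{Y}\|_2^2/T \le (1 + 1/\tau)^2\,\|\hat{Q} - Q\|_2^2/T$, which is the claim.

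I do not anticipate a genuine obstacle: the argument is a one-line Lipschitz estimate once $g_t$ is identified. The only points needing minor care are the indicator/positive-part identity (and the irrelevant boundary case $Y_t = q$, where both representations agree) and the fact that $\tau \in (0,1)$ enters only through $1/\tau > 0$, so the bound $1 + 1/\tau$ is valid (in fact not tight). If one prefers to avoid the Lipschitz language, the same bound follows from a four-way case split on the ordering of $Y_t$, $Q_t$, and $\hat{Q}_t$: in each region $g_t$ is affine with slope $1$ or $1 - 1/\tau$, and $\max\!\big(1, |1 - 1/\tau|\big) \le 1 + 1/\tau$, so that $|g_t(\hat Q_t) - g_t(Q_t)| \le (1 + 1/\tau)|\hat Q_t - Q_t|$ in every case, after which one squares and sums as before.
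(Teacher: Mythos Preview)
Your proof is correct and follows essentially the same route as the paper: both establish the pointwise Lipschitz bound $|\xi_\tau(Y_t,q)-\xi_\tau(Y_t,q')|\le (1+1/\tau)|q-q'|$ for the map $\xi_\tau(y,q)=q-\tfrac{1}{\tau}\mathds{1}(y<q)(y-q)$, then square and sum over $t$. The only cosmetic difference is that the paper verifies the inequality by an explicit four-case ordering argument, whereas you obtain it more compactly via the positive-part identity and the triangle inequality for Lipschitz constants.
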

\begin{proof}
See Appendix \ref{proof:augregress-quantile-bound}.
\end{proof}

We consider quantile predictions that are generated using the penalized quantile regression estimator of BCMPW: 
\begin{align}
\hat{\alpha} := \argmin \sum_{t=1}^T \left(\tau - \mathds{1}(Y_t - X_t'\alpha)\right) (Y_t - X_t'\alpha) + \nu \|\alpha\|_{1,T},
\end{align}
with $\nu \asymp \sqrt{\log(pa_T)/d_T}$. The quantile predictions follow as $\hat{Q}_t = X_t'\hat\alpha$. 

Lemma \ref{lem:quant-consistency} below states that the prediction error $\|\hat{Q} - Q\|_2^2/T$ converges to zero in probability under Condition \ref{cond:BCMPW}. Condition \ref{cond:BCMPW} contains the relevant conditions under which $\hat{\alpha}$ is shown to be consistent in BCMPW. 
Lemmas \ref{lem:augregress-quantile-bound} and \ref{lem:quant-consistency} together imply $\|\hat{Y} - \tilde{Y}\|_2^2/T = o_P(1)$.

\begin{customcond}{QR}\label{cond:BCMPW}
The DGP satisfies the BCMPW conditions as stipulated in Appendix \ref{appx:belloni}.
\end{customcond}

\begin{mylem}\label{lem:quant-consistency} Under Condition \ref{cond:BCMPW}, it follows $\|\hat{Q} - Q\|_2^2/T = o_P(1)$.
%; and (ii) $ \|\hat{Y} - \tilde{Y}\|_2^2/T \leq \left(1+\frac{1}{\tau}\right)^2  o_P(1)$.
\end{mylem}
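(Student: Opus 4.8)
The plan is to reduce the statement about $\|\hat{Q}-Q\|_2^2/T$ directly to the prediction-error bound that BCMPW establish for their penalized quantile regression estimator. Recall that $\hat{Q}_t = X_t'\hat{\alpha}$ and $Q_t = X_t'\alpha^0$, so that $\|\hat{Q}-Q\|_2^2/T = \frac{1}{T}\|X(\hat{\alpha}-\alpha^0)\|_2^2$ is exactly the in-sample prediction error of the quantile regression fit. Thus the entire content of the lemma is that, under Condition \ref{cond:BCMPW}, this quantity is $o_P(1)$.

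First I would state precisely the conclusion that BCMPW obtain for $\hat{\alpha}$ under their conditions (as collected in Appendix \ref{appx:belloni}): on an event of probability approaching one, and with the penalty level $\nu \asymp \sqrt{\log(pa_T)/d_T}$ chosen as in the display preceding the lemma, one has a bound of the form $\frac{1}{T}\|X(\hat{\alpha}-\alpha^0)\|_2^2 \lesssim s_0 \nu^2 / \phi^2$ for the relevant restricted eigenvalue $\phi$ (up to the usual constants and, in the time-series case, the blocking factors $a_T,d_T$). Substituting $\nu \asymp \sqrt{\log(pa_T)/d_T}$ gives $\frac{1}{T}\|X(\hat{\alpha}-\alpha^0)\|_2^2 \lesssim s_0 \log(pa_T)/d_T$ with high probability. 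The BCMPW conditions are assumed (via Condition \ref{cond:BCMPW}) to guarantee that this right-hand side vanishes — indeed this is the same rate object that appears in Condition \ref{cond:rates}.3 — so the prediction error converges to zero in probability. One then concludes $\|\hat{Q}-Q\|_2^2/T = o_P(1)$.

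A few bookkeeping points I would address explicitly. I need to make sure the penalty norm $\|\cdot\|_{1,T}$ used here matches the one in BCMPW; the text already notes this is deliberate, but I would cite the place in Appendix \ref{appx:belloni} where the equivalence (using event $\mathcal{S}$, i.e.\ $\max_i |\hat{\sigma}_i - \sigma_i|\le 1/4$, so the empirical weights are bounded away from $0$ and $\infty$) lets us transfer their bound. I would also point out that their Condition set includes the restricted eigenvalue / restricted nonlinearity conditions and the moment and $\beta$-mixing assumptions needed for their Fuk--Nagaev-type deviation bounds, so nothing additional is required here. Finally, since the statement only asks for $o_P(1)$ and not a rate, I can afford to be slightly lossy: any bound of the form $s_0 \nu^2 \cdot (\text{slowly growing factor})$ tending to zero under the BCMPW conditions suffices.

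The main obstacle — really the only nontrivial part — is not mathematical but organizational: correctly importing BCMPW's theorem statement into the present notation and verifying that Condition \ref{cond:BCMPW} (as spelled out in Appendix \ref{appx:belloni}) is exactly strong enough to deliver both the high-probability bound and the vanishing of $s_0\log(pa_T)/d_T$. If their result is stated as an estimation-error bound $\|\hat{\alpha}-\alpha^0\|_1 = O_P(s_0\nu)$ rather than a prediction-error bound, I would instead pass through $\frac{1}{T}\|X(\hat{\alpha}-\alpha^0)\|_2^2 = (\hat{\alpha}-\alpha^0)' \hat{\Sigma} (\hat{\alpha}-\alpha^0)$ and control $\hat{\Sigma}$ on the event $\mathcal{U}$, but the cleanest route is to invoke their prediction-error bound directly. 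I would keep the proof to a couple of lines plus a reference.
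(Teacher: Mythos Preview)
Your proposal identifies the right ingredients, but the route you call ``cleanest'' is not quite available: BCMPW's Theorem~4.2 controls the \emph{population} prediction error $J_\tau(\hat\alpha-\alpha^0) = \underline{f}\,(\hat\alpha-\alpha^0)'\Sigma(\hat\alpha-\alpha^0)$, not the empirical quantity $\frac{1}{T}\|X(\hat\alpha-\alpha^0)\|_2^2 = (\hat\alpha-\alpha^0)'\hat\Sigma(\hat\alpha-\alpha^0)$. The paper therefore follows exactly what you describe as your fallback: it writes $\|\hat Q-Q\|_2^2/T$ in terms of $\hat\Sigma$, invokes cone membership $\hat\alpha-\alpha^0\in A_\alpha$ (via Lemma~S4 in BCMPW) together with event $\mathcal{U}$ to pass from $\hat\Sigma$ to $\Sigma$ on the cone, and only then applies BCMPW's bound $J_\tau(\hat\alpha-\alpha^0)^{1/2}\lesssim \psi_T\sqrt{(1+s_{\alpha,0})a_T\log(pa_T)}/(\sqrt{d_T}\,\kappa_0\,\underline{f}^{1/2})$.

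Two further pieces you should not skip. First, the paper separately checks that $P(\mathcal{U})\to 1$ under the quantile-side rate conditions in Condition~\ref{cond:BCMPW} (these differ from Condition~\ref{cond:rates} in that they involve $\kappa_0$, $\underline{f}$, and $s_{\alpha,0}$ rather than $\phi_0$ and $s_0$); this is a short Fuk--Nagaev argument as in Lemma~\ref{lem:nonasymp-prob}. Second, it explicitly verifies that BCMPW's Assumptions~2--4 hold in the present setup, which takes a short paragraph --- in particular, exhibiting the decomposition $Y_t = X_t'\alpha^0 + \epsilon_t$ with $P(\epsilon_t\le 0\mid X_t)=\tau$ and arguing that $\{(X_t,\epsilon_t)\}$ inherits the $\beta$-mixing rate. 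So promote your fallback to the main argument and add these verification steps.
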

\begin{proof}
See Appendix \ref{proof:quant-consistency}.
\end{proof}

\section{Empirical study: Systemic risk of the banking sector}

\subsection{Systemic risk measurement by CoES}\label{sec:coes}

We consider an application in systemic risk, in which we study the risk spillover from the banking sector to the entire market. To measure this relationship we make use of the CoES measure of \citet{adriancovar}. This measure was originally introduced to quantify risk spillover from one financial firm to the financial sector, but it is also suitable to measure the spillover from the banking sector to the entire market, as we do here. 

In \citet{adriancovar} CoES was introduced alongside the CoVaR measure. CoVaR has been widely used for systemic risk assessment, and yet (Co)ES is preferred over (Co)VaR from a theoretical perspective: VaR does not satisfy the coherence property for risk measures \citep{artzner1999coherent}, whereas ES does satisfy it. It is due to this argument and others that ES is now the preferred risk measure for financial institutions by the Basel Committee on Banking Supervision (see, e.g., \citet{basel2016}). %Hence, CoES should be prefered for systemic risk analysis.

We formally define CoES and related quantities. Given a universe of assets, consider the market return $R^{M}_t$ and the banking industry return $R^{I}_t$. The definition of CoES of the market relative to the banking industry relies on two Value-at-Risk measurements, $\text{VaR}^{I}_{t}$ and $\text{VaR}^{M}_{t}$, which are defined (implicitly) as the following tail quantiles:
\begin{align*}
\tau &= P\left(R^I_{t} \leq \text{VaR}^{I}_{t} \ \left | \right. \  Z_{t-1} \right)\,\,\, \text{ and } \,\,\,
\tau = P\left(R^M_{t} \leq \text{VaR}^{M}_{t} \ \left | \right. \ \left(R^I_t,Z_{t-1}'\right)' \right),
\end{align*}
with $Z_{t-1}$ denoting a vector of lagged state variables (and where we ignore the sign convention for Value-at-Risk). 

From the definitions it is clear that $\text{VaR}^{I}_{t}$ denotes the $\tau-$quantile of $R^{I}_t$ given the lagged state variables $Z_{t-1}$. The $\text{VaR}^{M}_{t}$ denotes the $\tau-$quantile of $R^M_t$ given the lagged state variables $Z_{t-1} $\textit{and} the \textit{contemporaneous} industry return $R^{I}_t$. The quantile level $\tau\in(0,1)$ should be taken small, e.g. $\tau = 2.5\%$.

The $\text{CoES}_{t}$ is defined as follows
\begin{align*}
\text{CoES}_{t} := E\left[R^M_{t} \ \left | \ \right. R^M_{t} \leq \text{VaR}^{M}_{t}, \ R^I_{t} = \text{VaR}^{I}_{t}, \ Z_{t-1} \right],
\end{align*}
and therefore denotes the ES of $R^M_{t}$ given the regressor vector $(R^I_{t}, Z_{t-1}')'$ and the event $\left\{R^I_t = \text{VaR}^{I}_{t}\right\}$. From the conditioning on the latter event, it becomes clear that the distress state in the definition of CoES is quantified by the banking industry return being at its Value-at-Risk conditional on the state variables $Z_{t-1}$.

Finally, $\Delta$CoES describes the marginal change in CoES moving from a normal state of the banking sector to a distress state, i.e. from event $\{R^{I}_t = \text{Median}^{I}_t\}$ to $\{R^{I}_t = \text{VaR}^{I}_{t}\}$, where the former event stipulates the banking return is at its median value given state variables $Z_{t-1}$. Formally,
\begin{align*}
\Delta\text{CoES}_{t} &= \text{CoES}_{t} - \text{CoES}^{\text{median}}_{t},
\end{align*}
where $\text{CoES}^{\text{median}}_{t} := E\left[R^M_{t} \ \left | \ \right. R^M_{t} \leq \text{VaR}^{M}_{t}, \ R^I_{t} = \text{Median}^{I}_{t}, \ Z_{t-1} \right]$. We recall that the median is defined as the quantile at quantile level $\tau=50\%$, such that its estimation proceeds similarly to that of $\text{VaR}^{I}_{t}$, but at $\tau = 50\%$ instead of 2.5\%.

% % The CoVaR, denoted $\text{CoVaR}^{(I)}_{t}(Z_{t-1};\tau^{(M)},\tau^{(I)})$, is defined as the quantile of the market return $R^{M}_t$ conditional $(R^{I}_t,Z_{t-1}')'$ and $R^{I}_t = \text{VaR}^{(I)}_{t}(X_{t};\tau^{(I)})$, for quantile level $\tau^{(M)}$. Finally, 
% The CoES, denoted $\text{CoES}^{(I)}_{t}(Z_{t-1};\tau^{(M)},\tau^{(I)})$ is defined as the defined as the expected shortfall of the market return $R^{M}_t$ conditional $(R^{I}_t,Z_{t-1}')'$ and $R^{I}_t = \text{VaR}^{(I)}_{t}(X_{t};\tau^{(I)})$, for quantile level $\tau^{(M)}$. We use CoES over CoVaR, the corresponding VaR specification, since Expected Shortfall is a \textit{cohesive} risk measure, a desirable property for risk measures (see Introduction or CITE ARTZNER ET AL), whereas VaR is not.

\subsection{Model and estimation}
We model the market return $R^{M}_t$ and industry return $R^{I}_t$ as in $(\ref{eq:linearmodel})$:
\begin{align}\label{eq:empiricalmodel}
R^M_t &= \phi(R^{I}_t,Z_{t-1})'\alpha^M(U^M_t);\\
R^I_t &= \psi(Z_{t-1})'\alpha^I(U^I_t),
\end{align}
where the regressor vectors $\phi(\cdot)$ and $\psi(\cdot)$ are known vector functionals that (nonlinearly) transform their arguments. This model is highly flexible and allows for a nonlinear relationship between state variables, market return, and banking industry return. The relationship may also vary with quantile level $\tau\in(0,1)$. We elaborate on the choice of nonlinear transformations of the regressors in Section \ref{sec:dictionary}. Finally, we recall that (\ref{eq:linearmodel}) generalizes the location-scale model, which is commonly used in risk management. 

This model implies the following VaR and ES models:
\begin{align*}
\text{VaR}^{M}_{t} &=\phi(R^{I}_t,Z_{t-1})' \alpha^{M};\\
\text{VaR}^{I}_{t} &= \psi(Z_{t-1})' \alpha^{I};\\
\text{ES}_{t} &= \phi(R^{I}_t,Z_{t-1})' \gamma^{M},
% \text{VaR}^{(M)}_{t}(X^{M}_t ;\tau^{(I)}) &= X^{M}_t'\alpha^{(T)};\\
% \text{ES}^{(M)}_{t}(X^{M}_t ;\tau^{(I)}) &= X^{M}_t'\gamma^{(T)};\\
\end{align*}
where we allow for implicit definitions of the coefficients for brevity. Finally, the CoES is given by
\begin{align*}
% \text{CoVaR}^{(M)}_{t}(X^{M}_t ;\tau^{(M)},\tau^{(I)}) &= \left(\text{VaR}^{(I)}_{t}(X^{I}_t ;\tau^{(I)}),Z_{t-1}'\right)' \alpha^{(M)}(\tau^{(M});\\
\text{CoES}_{t} &= \phi(\text{VaR}^{I}_{t},Z_{t-1})' \gamma^{M}.
\end{align*}

\subsection{Data, sample split, and estimation procedure}\label{sec:data-samplesplit-estimation}

We obtain \textit{weekly} banking industry return and market return for the US stock market from the website of Kenneth French (\url{https://mba.tuck.dartmouth.edu/pages/faculty/ken.french/data_library.html}). The weekly returns are respectively created from the daily 49 Industry Portfolio data set and the daily Factor data set. 

We also collect several state variables, which are lagged. The choice of state variables choice is based on \citet{adriancovar} and includes the variables: (1) equity volatility; (2) real estate return; (3) TED spread; (4) Treasury bill rate (change); (5) slope of the yield curve (change); (6) credit spread (change); (7) market return. Specifically, the equity volatility is calculated as the moving average of the most recent 22 daily squared market returns; the weekly real estate return and weekly market return are taken from the 49 Industry Portfolio and Factor data sets of Ken French, respectively; the TED spread is taken as the TEDRATE series  from the FRED database; the Treasury bill rate is taken as the secondary market rate on 3-month Treasury bills from the H.15 series from the FRED database; the slope of the yield curve is taken as the difference between the secondary market rate on 10-year Treasury securities and 3-month Treasury bills, both obtained from the H.15 series; the credit spread is taken as the difference between the Moody's Baa rate, taken as the BAA10Y series from the FRED database, and the secondary market rate on 10-year Treasury securities from the H.15 series.

Our sample runs from 11 January 2002 to 21 January 2022 for a total of 1,000 observations. We split the data into a training and test set, both consisting of 500 observations. We estimate the parameters of the models on the training set using the penalized methods outlined in this paper. The penalty parameters are selected using 5-fold cross validation. The folds are chosen as subsequent blocks of observations to account for the time-series features of the data. Our test set therefore runs from 2012 to 2022 and includes several crises, including the early stages of the Covid pandemic in 2020.

\subsection{High-dimensional features}\label{sec:dictionary}
To allow for a nonlinear relationships between the returns and the state variables, we create transformed state variables and returns using Chebyshev polynomials up to some degree $K$, which may be increasing with sample size $T$. In this case $p$ increases with the sample size $T$, such that we are in a high-dimensional setting. The theoretical results in this paper outline conditions under which the estimator is consistent and the prediction error is small in such settings.

We outline the procedure to obtain $\phi(R^{I}_t,Z_{t-1})$, denoting the regressor vector in the model of $R^M_t$. A similar procedure is applied to the regressor vector $\psi(Z_{t-1})$ in the model of $R^I_t$.
% The vector $\phi(R^{I}_t,Z_{t-1})$ contains a constant and Chebyshev polynomials of each of the elements of $S_t = (R^I_t,Z_{t-1}')'$ to the $K$th degree, such that it is a  $(8K + 1 \times 1)$ vector. 
Specifically, $\phi(R^{I}_t,Z_{t-1}) = (1,\phi_{1,t},\ldots,\phi_{7K,t})'$, with $\phi_{(i-1)K+k,t} = \phi_k(S_{it})$ corresponding to the $k$§ degree Chebyshev polynomial of $S_{it}$, where $S_{it}$ is the $i$th element of $S_t = (R^I_t,Z_{t-1}')'$, and 
\begin{align*}
\phi_k(S_{it}) := \begin{cases}
\cos\left(k \arccos(\tilde{S}_{it})\right) & \text{ if } |\tilde{S}_{it}| \leq 1, \\
\cosh\left(k \, \text{arcosh} (\tilde{S}_{it})\right) & \text{ if } \tilde{S}_{it} > 1,\\
(-1)^k \cosh(k\, \text{arcosh}\left(-\tilde{S}_{it})\right) &  \text{ if } \tilde{S}_{it} < 1
\end{cases}
\end{align*}
with $\tilde{S}_{it} = (2S_{it}-a_i-b_i)/(b_i-a_i)$, where $a_i$ and $b_i$ are suitably chosen endpoints of the approximation interval $[a_i,b_i]$ for $S_{i}$. For simplicity, we choose $a_i = \min_{t\in[T]}(S_{it})$ and $b_i = \max_{t\in[T]}(S_{it})$.

\subsection{Estimation results}\label{sec:emp:results}

We generate VaR, ES, and CoES predictions for several choices of $K$, denoting the degree of Chebyshev polynomial transformations we consider. For $K=1$ we use the unpenalized estimator which is considered the benchmark. Indeed, \citet{adriancovar} use the unpenalized quantile regression estimator with similar state variables to obtain CoVaR estimates. We also consider $K= 2$, 3, 5, and 10 and use the penalized estimators for these values of $K$.

It is suspected that $K$ should be moderate for two reasons. First, for tail quantile levels like $\tau = 2.5\%$, Lemma \ref{lem:augregress-quantile-bound} shows that the estimation error in the quantile models, which depends on $K$, have a pronounced effect on the convergence rates of the ES predictions and estimator. Second, the time-series dependence in and heavy-tailedness of the data implies that the rate of $K$ is determined by the rates of $s_0$ and $p$ (see Condition \ref{cond:rates}) and worsens, \textit{ceteris paribus}, for higher dependence and heavy-tailedness. Financial data often are dependent over time and heavy-tailed, such that $K$ should be relatively small.

To compare VaR and ES predictions for different $K$, we use the following out-of-sample mean prediction errors. For VaR predictions the mean prediction error is captured by the mean tick-loss (MTL):
\[\text{MTL} = \sum_{t} \left(\tau - \mathds{1}(Y_t - \hat{Q}_t)\right) (Y_t - \hat{Q}_t),\]
where $\hat{Q}_t$ denotes a quantile prediction for period $t$ with parameters estimated over the training set and the sum runs over the periods in the test set.

For ES predictions the mean prediction error is captures by the mean squared prediction error for ES (ES-MSE):
\[\text{ES-MSE} = \sum_{t} \left(\hat{Y}_t - \widehat{ES}_t\right)^2,\]
where we recall $\hat{Y}_t = \hat{Q}_t + \frac{1}{\tau} \mathds{1}(Y_t < \hat{Q}_t) (Y_t - \hat{Q}_t)$, $\widehat{ES}_t$ denotes an ES prediction for period $t$ with parameters estimated over the training set, and the sum runs over the periods in the test set. The ES-MSE function is closely related to the objective function of the penalized ES estimator in this paper, see Equation (\ref{eq:ES-estimator}). The mean prediction errors MTL and ES-MSE will have different scales and should not be compared to each other.

Panel A in Table \ref{tab:emp-results} contains the mean prediction errors for the VaR and ES predictions of the weekly market return $R^M_t$ conditional on the weekly banking industry return $R^I_t$ and state variables $Z_{t-1}$. The results are shown for varying $K$, which denotes the degree of Chebyshev polynomials we consider. %For $K=1$ we leave the estimator unpenalized ($\lambda = 0$). For $K \geq 2$, we use cross-validation to obtain the optimal penalty $\lambda$. 

We observe that $K=3$ is the optimal choice, giving the smallest prediction errors for the VaR and ES predictions at quantile level $\tau =2.5\%$ and performing much better than the unpenalized benchmark ($K=1$). This provides good evidence in favor of  models of market risk that are nonlinear in the  industry return and state variables. Moreover, this value of $K$ may be considered moderate, such that it is reasonable the asymptotic convergence results of Lemma \ref{lem:asymp} apply.

At the median ($\tau=50\%$), the prediction error of the VaR prediction (median prediction) is less sensitive to $K$. The same holds to a lesser extent for the ES predictions. 
This can be explained by the \textit{mean-blur} property of high-frequency returns: the mean (median) of such returns is indistinguishable from zero due to low signal-to-noise ratio (see, e.g. \citet{christoffersen2011elements}).
% This can be explained by the mean of high-frequency returns, like the weekly returns we consider, display the \textit{mean-blur} property: the mean (median) of such returns is indistinguishable from zero due to low signal-to-noise ratio (see, e.g. NOTE CITe Christoffersen book). A consequence of mean-blur would be that state variable coefficients are estimated close to zero.

Panel B in Table \ref{tab:emp-results} contains the mean prediction errors for the VaR predictions of the banking industry return $R^I_t$ conditional on the state variables $Z_{t-1}$. The choice $K=2$ is optimal, although its results are quite similar to $K=3$ at quantile level $\tau=2.5\%$. The mean-blur property, again, makes outcomes relatively insensitive to $K$ for the median ($\tau=50\%$).

Finally, Panel C presents average $\Delta$CoES estimates over the test set. The $\Delta$CoES estimates rely on each of the ES and VaR predictions we have discussed above, as evident from its definition in Section \ref{sec:coes}. As discussed above, the mean prediction error results in Panels A and B point towards $K=3$ as the optimal choice. At the optimal choice ($K=3$) we observe that average $\Delta$CoES is largest in absolute value. This suggests that the benchmark ($K=1$) underestimates systemic risk.

% Table generated by Excel2LaTeX from sheet 'Sheet1'
\begin{table}[htbp]
\footnotesize
  \centering
  \caption{Mean prediction errors (out-of-sample) \& average $\Delta-$CoVaR.}
    \begin{tabular}{rrrrrrr}
    \toprule
    \multicolumn{7}{c}{\boldmath{}\textbf{Panel A: VaR and ES prediction errors}\unboldmath{}} \\
    \multicolumn{7}{c}{(VaR \& ES of $R^{M}_t$ given $(R^I_t, Z_{t-1}')'$)} \\
    \midrule
          &       & \multicolumn{1}{l}{MTL} & \multicolumn{1}{l}{ES-MSE} &       & \multicolumn{1}{l}{MTL} & \multicolumn{1}{l}{ES-MSE} \\
\cmidrule{3-4}\cmidrule{6-7}    $K$   &       & \multicolumn{2}{c}{$\tau = 0.025$} &       & \multicolumn{2}{c}{$\tau = 0.5$ (median)} \\
\cmidrule{1-1}\cmidrule{3-4}\cmidrule{6-7}
    \multicolumn{1}{l}{1 (unpenalized)} &       & 0.084 & 93.774 &       & \textbf{0.186} & 0.557 \\
    2     &       & 0.038 & 12.706 &       & 0.188 & 0.394 \\
    3     &       & \textbf{0.032} & \textbf{6.840} &       & 0.191 & \textbf{0.387} \\
    5     &       & 0.040 & 21.735 &       & 0.186 & 0.446 \\
    10    &       & 0.046 & 30.383 &       & 0.199 & 0.421 \\
    \midrule
    \multicolumn{7}{c}{\boldmath{}\textbf{Panel B: VaR prediction errors}\unboldmath{}} \\
    \multicolumn{7}{c}{(VaR of $R^{I}_t$ given state variables $ Z_{t-1}$)} \\
    \midrule
          &       & \multicolumn{1}{l}{MTL} &       &       & \multicolumn{1}{l}{MTL} &  \\
\cmidrule{3-4}\cmidrule{6-7}    $K$   &       & \multicolumn{2}{c}{$\tau = 0.025$} &       & \multicolumn{2}{c}{$\tau = 0.5$ (median)} \\
\cmidrule{1-1}\cmidrule{3-4}\cmidrule{6-7}
    1 (unpenalized) &       & 0.101 &       &       & 0.492 &  \\
    2     &       & \textbf{0.095} &       &       & \textbf{0.484} &  \\
    3     &       & 0.096 &       &       & 0.486 &  \\
    5     &       & 0.123 &       &       & 0.485 &  \\
    10    &       & 0.099 &       &       & 0.484 &  \\
    \midrule
    \multicolumn{7}{c}{\textbf{Panel C: Average $\Delta$CoES ($\tau = 0.025$) }} \\
    \midrule
    $K$   &       & \multicolumn{1}{l}{} &       &       &       &  \\
\cmidrule{1-1} 
    1 (unpenalized) &       & -2.169 &       &       &       &  \\
    2     &       & -2.454 &       &       &       &  \\
    3     &       & -2.559 &       &       &       &  \\
    5     &       & -2.066 &       &       &       &  \\
    10    &       & -2.129 &       &       &       &  \\
    \bottomrule
    \end{tabular}%
  \label{tab:emp-results}%
  \begin{tablenotes}
  \item Note: This table contains out-of-sample mean prediction errors for VaR and ES forecasts (Panels A \& B) and the average $\Delta$CoES over the test set (Panel C). The prediction error is captured by the mean tick-loss (MTL) for VaR forecasts and the relevant squared error loss (ES-MSE) for ES forecasts. The parameter $K$ denotes the degree of Chebyshev polynomials we use to transform the each of the regressors (see Section \ref{sec:dictionary}). %Higher $K$ implies more regressors ($p$).
  \end{tablenotes}
\end{table}%

Finally, Figure \ref{fig:deltacor} show $\Delta$CoES predictions over the test set for $K=1$ and $K=3$, respectively denoted the small and large model. We observe that the large model responds quicker to news and predicts much greater risk spillover in several periods, including the first half of 2020, which corresponds to the onset of the Covid pandemic in the US. 

In Appendix \ref{appx:additional-plots} we include more plots of the ES and VaR predictions generated by the models discussed above for $K=1$ and 3 and $\tau=2.5\%$ and 50\%, as well as a plot of the market and industry returns. An interesting observation coming from these plots is that the larger ES and VaR models ($K=3$) for the market return are less sensitive to positive banking industry returns than the models that use $K=1$. This behavior is most pronounced for the large positive industry returns observed in 2020 and suggests that the inclusion of nonlinear transformations of the state variables can accommodate a \textit{leverage effect} for the choice $K=3$, i.e.~that negative industry returns increase market risk more than positive industry returns (see, e.g., \citet{christoffersen2011elements}). 

Finally, at $\tau = 50\%$ we note that the penalization of the VaR estimator for banking industry returns, gives predictions that are very close to zero for all periods. The unpenalized estimator ($K=1$) gives predictions that are much more volatile. This suggests that the penalized estimator is able to accommodate the low signal-to-noise ratio in the data as concerns the conditional median. 

\begin{figure}
\centering
\includegraphics[width=0.8\textwidth]{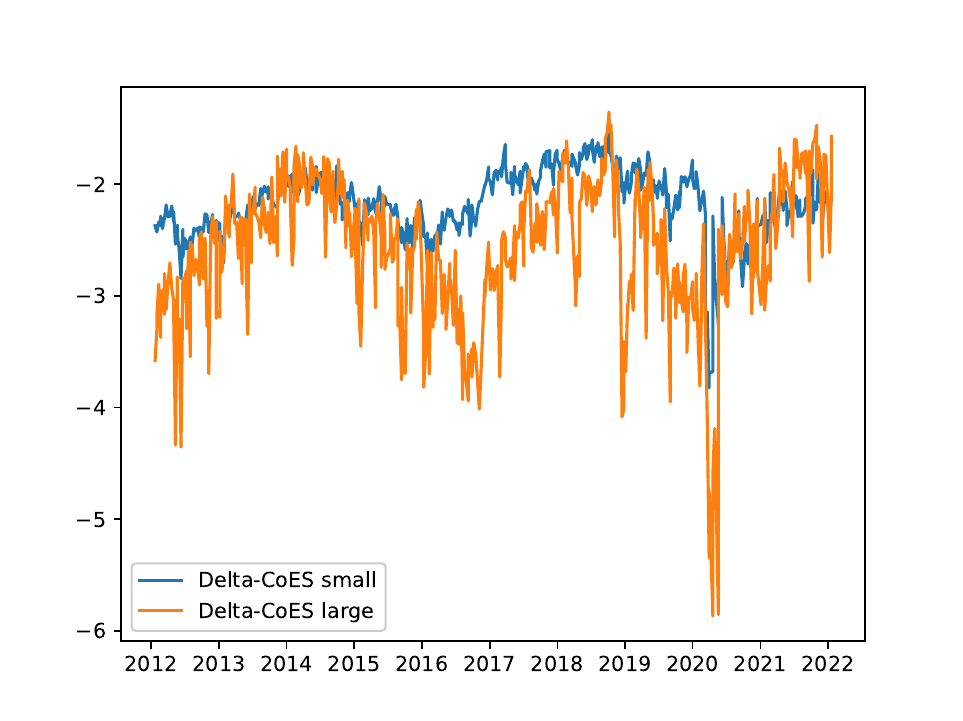}
\caption{$\Delta$CoES predictions. This plot shows predictions for $K=1$ (small) and $K=3$ (large) models.}
\label{fig:deltacor}
\end{figure}

\section{Simulation study}\label{sec:simulation}

\subsection{Simulation design}
We study the small sample properties of our estimator in a Monte Carlo simulation experiment. In each Monte Carlo iteration we estimate the parameters of the quantile and ES models on simulated data using our penalized ES estimator and the penalized QR estimator of BCMPW. We then compare the estimation and prediction errors of the estimators to those of their unpenalized counterparts. 

The simulation DGP we consider is a special case of (\ref{eq:linearmodel}):
\begin{align}
Y_t &= X_t'\xi + \left(X_t'\zeta\right) \nu_t,\label{eq:simDGP}
\end{align}
where $X_t$ denotes the regressor vector and $\nu_t\overset{iid}{\sim} N(0,\sigma_\nu^2)$ denotes the idiosyncratic error. In this model, $X_t'\xi$ and $X_t'\zeta$ describe the conditional mean and conditional volatility of $Y_t$, respectively. The elements of the regressor vector $X_t$ are generated as the Chebyshev polynomial transformations up to the $K$th degree of each of the elements of a regressor vector $Z_t$ to be defined below. We add one to each of Chebyshev polynomials and subsequently standardize by their respective standard deviations. The first manipulation ensures that the elements of $X_t$ are positive, such that $X_t'\zeta$ satisfies the positivity property of the conditional volatility if all elements of $\zeta$ are positive. The second manipulation ensures that the impact on $Y_t$ of the nonzero elements in the coefficient vectors $\xi$ and $\zeta$ is similar across Monte Carlo iterations. The regressor vector $X_t$ contains $p = 1 + dK$ elements, where $d$ denotes the dimension of the regressor vector $Z_t$, as it contains a constant and $K$ transformations of each of the elements in $Z_t$. That (\ref{eq:simDGP}) is a special case of (\ref{eq:linearmodel}) follows since we can rewrite $Y_t = X_t'\alpha(U_{t})$, where 
$\alpha(u) = \xi + \zeta Q_\eta(u)$ and $Q_\eta(u)$ denotes the quantile function corresponding to the distribution of $\nu_t$.

The regressor vector $Z_t$ is correlated in the cross-section and over time. Specifically, $Z_t = (Z_{1,t},\ldots,Z_{d,t})'$ and
\begin{align*}
Z_{i,t} &= \rho Z_{i,t-1} + \upsilon_{i,t},\\
\upsilon_{i,t} &= \sqrt{\theta} F_t + \sqrt{1-\theta^2} \psi_{i,t},
\end{align*}
where $F_t$ and $\psi_{i,t}$ are iid $N(0,1-\rho^2)$ distributed, such that $\text{Corr}\left(Z_{i,t},Z_{j,t}\right) = \theta$, $\text{Corr}\left(Z_{i,t},Z_{i,t-1}\right) = \rho$, and $\text{Var}(Z_{i,t}) = 1$, for all $i,j\in[d]$.

The coefficient vectors $\xi$ and $\zeta$ are chosen such that the vectors are sparse and contain $s_0$ nonzero elements each. The nonzero elements will differ between $\xi$ and $\zeta$. We consider $Z_{1,t}$, $Z_{2,t}$, and $Z_{3,t}$ to be relevant to the prediction of $Y_t$. Let the first $1+3K$ elements of $X_t$ be the constant and the transformations of $Z_{1,t}$, $Z_{2,t}$, and $Z_{3,t}$. In each Monte Carlo iteration we randomly draw the relevant transformations of $Z_{1,t}$, $Z_{2,t}$, and $Z_{3,t}$. Specifically, let $S_\xi \subset \{2,\ldots,1+3K\}$ and $S_\zeta \subset \{2,\ldots,1+3K\}$ denote the respective sparsity sets, with cardinality $s_0$ each, where the $s_0$ elements in each set are randomly drawn without replacement from the index set $\{2,\ldots,1+3K\}$. The sequence of draws is important in our design. Hence, we let $S_\xi^i$ and $S_\zeta^i$ denote the index that was drawn as the $i$th draw in the sequence. We then choose $\xi_{S_\xi^i} = 1/(1 + i)$ and $\xi_i = 0$ if $i \in [p] \setminus S_\xi$. Similarly, we choose $\zeta_{S_\zeta^i} = 1/(1 + i)$ and $\zeta_i = 0$ if $i \in [p] \setminus S_\zeta$.

We study the following parameter choices: cross-sectional correlation $\alpha = 0.15$; time-series correlation $\rho=0.5$; sample sizes $T=\{250,500\}$; amount of untransformed regressors $d = 7$; degree of Chebyshev transformations $K = \{3,5\}$; sparsity coefficient $s_0 = 2$; quantile levels $\tau = \{2.5\%,10\%\}$; and error variances $\sigma^2_\nu = \{1/4,1\}$. 
%Recalling that the dimension of $X_t$ is $p=1+dK$, some of the parameter choices have $p$ larger than the effective sample size $\lfloor\tau T\rfloor$.
 As an example, in the scenario $d=7$, $K = 5$, and $T = 500$, the regression contains a constant and $dK = 35$ regressors, of which $2s_0 = 4$ are relevant. For both samples sizes and tail quantiles under consideration, we would not expect the unpenalized estimator to perform well in this scenario for weekly financial data.

\subsection{Simulation results}

For each parameter setting and each iteration we compute the estimation and prediction errors. The estimation error is measured by $\|\hat\gamma - \gamma^0\|_1$ for the ES estimator and by $\|\hat\alpha - \alpha^0\|_1$ for the quantile estimator. The prediction error is computed out-of-sample and given by the ES-MSE and MTL functions for the ES and quantile estimator, respectively, see Section \ref{sec:emp:results}. To obtain the out-of-sample prediction errors we draw a sample of $2T$ observations, reserve the first $T$ observations for estimation, and compute the out-of-sample prediction error on the final $T$ observations. For the penalized estimators, we choose the penalty parameter by five-fold cross-validation for time-series data and refer to Section \ref{sec:data-samplesplit-estimation} for details. We report the simulation average of the estimation and prediction errors over 1,000 Monte Carlo iterations.

Table \ref{tab:ES-sim} contains results for the ES estimator. We first discuss the simulation average of the estimation error in Panel A. In all scenarios we observe the penalized estimator outperforms the unpenalized counterpart. The penalized estimator improves if we increase the sample size from $T = 250$ to 500, which suggests our estimator is consistent. The unpenalized estimator improves with the sample size if $K=3$, but if $K=5$ its performance is extremely bad. This indicates that the performance of the unpenalized estimator already breaks down at a moderate amount of regressors. Finally, the performance of the estimator improves if the signal-to-noise ratio, as measured by $\sigma_\nu$, decreases or if the quantile level $\tau$ is increased.

A similar picture arises for the prediction error in Panel B in Table \ref{tab:ES-sim}. We specifically note that the simulation average of the  prediction error of the unpenalized estimator is about an order larger than that of the penalized estimator if $K = 5$. This suggests the out-of-sample performance of the predictions will be much better for the penalized estimator in comparison to the unpenalized estimator for scenarios with a moderate amount of regressors. We note that we should not compare the prediction errors at different quantile levels $\tau$, ceteris paribus, since the functional form of the prediction errors changes with $\tau$. Table \ref{tab:QR-sim} in the Appendix contains the simulation results for the quantile estimator. We generally find similar results, although at $K=5$ the difference between the penalized and unpenalized quantile estimator in terms of average prediction error is not as stark as for the ES estimator.

% Table generated by Excel2LaTeX from sheet 'GENERATE TABLE'
\begin{table}[htbp]
\footnotesize
  \centering
  \caption{Simulation results for ES estimator}
    \begin{tabular}{rlrrrrrr}
    \toprule
    \multicolumn{7}{c}{\textbf{Panel A: Average estimation error}}\\
    \midrule
          &       &       & \multicolumn{5}{c}{$K = 3$} \\
\cmidrule{4-8}          &       &       & \multicolumn{2}{c}{$\sigma_\nu = 1/4$} &       & \multicolumn{2}{c}{$\sigma_\nu = 1$} \\
\cmidrule{4-5}\cmidrule{7-8}          &       &       & $T =500$ & $1,000$ &       & $T =500$ & $1,000$ \\
\cmidrule{1-2}\cmidrule{4-5}\cmidrule{7-8}    \multicolumn{1}{l}{$\tau = 2.5\%$} &       &       &       &       &       &       &  \\
          & Penalized &       & 0.11  & 0.07  &       & 0.35  & 0.20 \\
          & Unpenalized &       & 0.43  & 0.36  &       & 2.29  & 1.82 \\
\cmidrule{1-2}\cmidrule{4-5}\cmidrule{7-8}    \multicolumn{1}{l}{$\tau = 10\%$} &       &       &       &       &       &       &  \\
          & Penalized &       & 0.04  & 0.03  &       & 0.13  & 0.11 \\
          & Unpenalized &       & 0.25  & 0.18  &       & 1.27  & 0.95 \\
    \midrule
          &       &       & \multicolumn{5}{c}{$K = 5$} \\
\cmidrule{4-8}          &       &       & \multicolumn{2}{c}{$\sigma_\nu = 1/4$} &       & \multicolumn{2}{c}{$\sigma_\nu = 1$} \\
\cmidrule{4-5}\cmidrule{7-8}          &       &       & $T =500$ & $1,000$ &       & $T =500$ & $1,000$ \\
\cmidrule{1-2}\cmidrule{4-5}\cmidrule{7-8}    \multicolumn{1}{l}{$\tau = 2.5\%$} &       &       &       &       &       &       &  \\
          & Penalized &       & 0.08  & 0.05  &       & 0.29  & 0.18 \\
          & Unpenalized &       & $>10^6$ & $>10^6$ &       & $>10^6$ & $>10^6$ \\
\cmidrule{1-2}\cmidrule{4-5}\cmidrule{7-8}    \multicolumn{1}{l}{$\tau = 10\%$} &       &       &       &       &       &       &  \\
          & Penalized &       & 0.02  & 0.02  &       & 0.10  & 0.08 \\
          & Unpenalized &       & $>10^6$ & $>10^6$ &       & $>10^6$ & $>10^6$ \\
\midrule
\multicolumn{7}{c}{\textbf{Panel B: Average prediction error}}\\
\midrule
      &       &       & \multicolumn{5}{c}{$K = 3$} \\
\cmidrule{4-8}      &       &       & \multicolumn{2}{c}{$\sigma_\nu = 1/4$} &       & \multicolumn{2}{c}{$\sigma_\nu = 1$} \\
\cmidrule{4-5}\cmidrule{7-8}      &       &       & $T =500$ & $1,000$ &       & $T =500$ & $1,000$ \\
\cmidrule{1-2}\cmidrule{4-5}\cmidrule{7-8}\multicolumn{1}{l}{$\tau = 2.5\%$} &       &       &       &       &       &       &  \\
      & Penalized &       & 25.41 & 11.08 &       & 127.33 & 62.14 \\
      & Unpenalized &       & 41.14 & 14.85 &       & 230.79 & 78.65 \\
\cmidrule{1-2}\cmidrule{4-5}\cmidrule{7-8}\multicolumn{1}{l}{$\tau = 10\%$} &       &       &       &       &       &       &  \\
      & Penalized &       & 2.75  & 2.29  &       & 15.22 & 13.53 \\
      & Unpenalized &       & 3.26  & 2.07  &       & 15.95 & 10.20 \\
\midrule
      &       &       & \multicolumn{5}{c}{$K = 5$} \\
\cmidrule{4-8}      &       &       & \multicolumn{2}{c}{$\sigma_\nu = 1/4$} &       & \multicolumn{2}{c}{$\sigma_\nu = 1$} \\
\cmidrule{4-5}\cmidrule{7-8}      &       &       & $T =500$ & $1,000$ &       & $T =500$ & $1,000$ \\
\cmidrule{1-2}\cmidrule{4-5}\cmidrule{7-8}\multicolumn{1}{l}{$\tau = 2.5\%$} &       &       &       &       &       &       &  \\
      & Penalized &       & 97.09 & 27.16 &       & 261.41 & 103.16 \\
      & Unpenalized &       & 873.05 & 172.78 &       & 4042.46 & 930.17 \\
\cmidrule{1-2}\cmidrule{4-5}\cmidrule{7-8}\multicolumn{1}{l}{$\tau = 10\%$} &       &       &       &       &       &       &  \\
      & Penalized &       & 3.04  & 2.38  &       & 15.74 & 13.26 \\
      & Unpenalized &       & 23.61 & 5.94  &       & 125.38 & 35.21 \\
\bottomrule
    \end{tabular}%
  \label{tab:ES-sim}%
  \begin{tablenotes}
  \item This table displays the simulation average of the estimation error of the ES estimator, as measured by $\|\hat\gamma - \gamma^0\|_1$, in Panel A and the simulation average of the prediction error of the ES estimator in Panel B. The prediction error is measured by the ES-MSE function (see Section \ref{sec:emp:results}). Results are included for the penalized and unpenalized estimator. The penalty is chosen by cross-validation in the penalized case and set to zero in the unpenalized case.
  \end{tablenotes}
\end{table}%

\section{Fuk-Nagaev inequality for beta-mixing}\label{sec:fuknagaev}
Finally, we derive a Fuk-Nagaev inequality for the maxima of high-dimensional sums of $\beta$-mixing sequences.
Before stating the Fuk-Nagaev inequality, we define the $\beta$-mixing property and the blocking strategy $(a_T,d_T)$. 

\begin{mydef}[$\beta$-mixing]\label{def:mixing}
Consider a random sequence $\{w_t\}$ on a probability space $(\Omega,\mathcal{F},P)$, and corresponding $\sigma$-fields $\mathcal{F}_{-\infty}^t := \sigma(\ldots,w_{t-1},w_{t})$ and $\mathcal{F}_{t}^\infty = \sigma(w_t,w_{t+1},\ldots)$. The sequence $\{w_t\}$ is $\beta$-mixing, if the mixing coefficient $\beta(k) := \sup_{t}\beta(\mathcal{F}_{-\infty}^t,\mathcal{F}_{t+k}^{\infty}) \rightarrow 0$ as $k\rightarrow \infty$, where, for any two $\sigma$-fields $\mathcal{G},\mathcal{H} \subset \mathcal{F}$,
\begin{align*}
\beta(\mathcal{G},\mathcal{H}) =\sup \frac{1}{2} \sum_{j=1}^L\sum_{j'=1}^{L'} \bigl| P(H_j\cap G_{j'}) - P(G_j)P(G_{j'})  \bigr|,
\end{align*}
with the supremum taken over all pairs of (finite) partitions $\{G_1,\ldots,G_L\}$ and $\{H_1,\ldots,H_{L'}\}$ of $\Omega$, such that $G_j \subset \mathcal{G}$ and $H_{j'}\subset \mathcal{H}$.
\end{mydef}

\begin{mydef}[Blocking strategy]\label{def:blockingstrat} Consider a pair $(a_T,d_T)$ of strictly positive integers satisfying $2a_Td_T \leq T$. For any sequence $\{w_1,\ldots,w_T\}$, we divide the sequence into $2d_T$ blocks of length $a_T$ and a remainder block of length $T-2d_T a_T$. Specifically, the blocks, as indicated by the time indices, are $H_j := \bigl\{t:2(j-1)a_T + 1 \leq t \leq (2j-1)a_T\bigr\}$, $G_j := \bigl\{t:(2j-1)a_T + 1 \leq t \leq 2j a_T\bigr\}$ and $Q := \bigl\{t:2j a_T +1  \leq t \leq T \bigr\}$, for $1\leq j \leq d_T$.
\end{mydef}

We are now ready to state the Fuk-Nagaev inequality.

\begin{mylem}\label{lem:fuknagaev}
Let $\{w_t\}$ be some $p$-dimensional strictly stationary and centered sequence that is $\beta$-mixing with coefficient $\beta(t)$. Moreover, for some $q \geq 2$, it holds  $\max_{1\leq i \leq p}E|w_{i1}|^q \leq \Delta_q$, where $w_{it}$ denotes the $i$th element of $w_t$. Let $(a_T,d_T)$ be a blocking strategy satisfying $T/2 - a_T \leq d_T a_T \leq T/2$.

Then, there exist uniform constants $C_q^{(1)},C_q^{(2)}>0$, such that, for any $u>0$,
\begin{align*}
P\left(\max_{1\leq i \leq p}\left|\frac{1}{T}\sum_{t=1}^T w_{t,i}\right| > u\right) 
&\leq 
3 p a_T \left(
\frac{C_q^{(1)}}{u^q d_T^{q-1}}
+ \exp\left(-C_q^{(2)}u^2d_T\right)
\right) + 
2 pd_T\beta(a_T).
\end{align*}
Moreover, there exist uniform constants $C_q^{(3)},C_q^{(4)}>0$, such that, for any $\delta \in (0,1)$, if $u = C_q^{(3)} \frac{(pa_T/\delta)^{1/q}}{d_T^{(q-1)/q}}\wedge C_q^{(4)} \frac{\sqrt{\log(pa_T/\delta)}}{\sqrt{d_T}}$, it follows
\begin{align*}
P\left(\max_{1\leq i \leq p}\left|\frac{1}{T}\sum_{t=1}^T w_{t,i}\right| > u\right) 
&\leq \delta + 2p d_T\beta({a_T}).
\end{align*}
\end{mylem}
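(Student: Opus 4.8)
The plan is to follow the blocking strategy of Definition \ref{def:blockingstrat}: partition the sum $\sum_{t=1}^T w_t$ into the ``odd'' blocks $H_1,\dots,H_{d_T}$, the ``even'' blocks $G_1,\dots,G_{d_T}$, and the remainder block $Q$. Write $S_H = \sum_{j=1}^{d_T} \sum_{t\in H_j} w_t$, and similarly $S_G$ and $S_Q$; then $\sum_{t=1}^T w_t = S_H + S_G + S_Q$, and since $|Q| = T - 2 d_T a_T \le a_T$ (by the constraint $T/2 - a_T \le d_T a_T$), the remainder contributes negligibly — I would bound $P(\max_i |S_{Q,i}| > uT/3)$ crudely using a union bound over $i\in[p]$, Markov's inequality at order $q$, and $\max_i E|w_{i1}|^q \le \Delta_q$, which produces a term of order $p a_T^{q} / (uT/3)^q$; since $T \asymp d_T a_T$ this is of the same form as the first polynomial term in the claimed bound. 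By symmetry/stationarity it suffices to treat $S_H$ (and $S_G$ identically), so the effective task is a Fuk-Nagaev bound for a sum of $d_T$ blocks each of length $a_T$.

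The second and central step is a coupling argument to replace the $\beta$-mixing block sums by independent ones. Enumerate the block-sums $V_j := \sum_{t\in H_j} w_t$, $j=1,\dots,d_T$; these are separated by gaps of length $a_T$ (the interleaved $G$-blocks), so by the standard Berbee / Bradley coupling lemma for $\beta$-mixing sequences one can construct i.i.d.\ copies $\tilde V_j$ (each with the law of $V_1$) on a common probability space such that $P(V_j \ne \tilde V_j) \le \beta(a_T)$ for each $j$; a union bound gives $P(\exists j: V_j \ne \tilde V_j) \le d_T \beta(a_T)$, and doing this for each of the $p$ coordinates' sequences — or more cleanly, coupling the $\mathbb{R}^p$-valued block sums at once — yields the term $p d_T \beta(a_T)$ (the factor $2$ in the final bound absorbing the separate treatment of $S_H$ and $S_G$). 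On the complement of this coupling-failure event, $\max_i |S_{H,i}|$ equals $\max_i |\sum_j \tilde V_{j,i}|$, a maximum over $p$ coordinates of sums of $d_T$ i.i.d.\ centered terms.

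The third step is to apply a classical Fuk-Nagaev inequality for i.i.d.\ sums coordinatewise, then union-bound over $i\in[p]$. For fixed $i$, $\tilde V_{1,i},\dots,\tilde V_{d_T,i}$ are i.i.d., centered; I need a moment bound: $E|V_{1,i}|^q = E|\sum_{t\in H_1} w_{t,i}|^q \le a_T^{q-1}\sum_{t\in H_1} E|w_{t,i}|^q \le a_T^{q}\Delta_q$ by Jensen/power-mean, and similarly the variance of each $\tilde V_{j,i}$ is at most (a constant times, using stationarity and absolute summability of covariances under $\beta$-mixing) $a_T \cdot$const, so $\mathrm{Var}(\sum_j \tilde V_{j,i}) \lesssim d_T a_T \asymp T$. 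The standard Fuk-Nagaev bound (e.g.\ in the form of Nagaev's inequality) then gives, for $x>0$,
\[
P\!\left(\Bigl|\sum_{j=1}^{d_T}\tilde V_{j,i}\Bigr| > x\right) \;\lesssim\; \frac{d_T a_T^{q}\Delta_q}{x^{q}} \;+\; \exp\!\Bigl(-c\,\frac{x^2}{d_T a_T \Delta_2}\Bigr),
\]
and substituting $x = uT/3 \asymp u d_T a_T$ converts the polynomial term to order $d_T a_T^{q}/(d_T a_T)^q = a_T^{q}/(d_T a_T)^{q-1} \asymp a_T/d_T^{q-1}$ after absorbing $a_T^{q-1}$... more carefully $a_T^q/(u^q d_T^{q-1} a_T^{q-1}) = a_T/(u^q d_T^{q-1})$, and the exponential term to $\exp(-c\, u^2 d_T a_T / a_T) = \exp(-c\, u^2 d_T)$ — matching the claimed forms up to the overall factor $p a_T$ (the union bound over $p$, times the $a_T$ that I would carry by choosing the constants $C_q^{(1)}, C_q^{(2)}$ generously, or by a slightly finer accounting — here I would check the exact bookkeeping against the target inequality). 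Collecting the three contributions $S_H, S_G, S_Q$ with the $u/3$ splits and relabeling constants gives the first display of the lemma.

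The fourth step is routine: given the first display, plug in the stated $u = C_q^{(3)} (p a_T/\delta)^{1/q} d_T^{-(q-1)/q} \wedge C_q^{(4)}\sqrt{\log(p a_T/\delta)/d_T}$. For this $u$, the first choice makes the polynomial term $3 p a_T \cdot C_q^{(1)}/(u^q d_T^{q-1}) \le \delta/2$ by construction, and the second choice makes $3 p a_T \exp(-C_q^{(2)} u^2 d_T) \le \delta/2$; adding the untouched $2 p d_T \beta(a_T)$ term yields $\delta + 2 p d_T \beta(a_T)$.

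I expect the main obstacle to be the second step: getting the coupling clean at the level of $\mathbb{R}^p$-valued block sums while keeping the probability cost at exactly $p d_T \beta(a_T)$ (rather than, say, $p^2$ or an extra factor), and making sure the Berbee-type lemma is invoked under the correct filtration (the $\beta$-mixing coefficients in Condition \ref{cond:mixing} are for the scalar sequences $\{X_{it}\varepsilon_t\}$ and $\{X_{it}X_{jt}\}$ individually, whereas here $\{w_t\}$ is a generic $p$-dimensional $\beta$-mixing sequence, so one must be careful about whether the coupling is applied coordinatewise or jointly, and reconcile the statement accordingly). The i.i.d.\ Fuk-Nagaev input and the final substitution are standard; the remainder-block and constant-chasing are tedious but present no conceptual difficulty.
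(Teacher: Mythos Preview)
Your plan is sound and follows the same high-level architecture as the paper (blocking, coupling to independence, coordinatewise Fuk--Nagaev, union bound), but the central technical step is organized differently. The paper does \emph{not} treat the block sums $V_j=\sum_{t\in H_j}\zeta_t$ as atoms. Instead, after coupling (via Lemmas~4.1--4.2 of Yu, 1994, which is exactly the Berbee-type argument you propose), it rewrites $\sum_{j}\sum_{t\in H_j}\zeta_t=\sum_{l=1}^{a_T}\sum_{j=1}^{d_T}\zeta_{2(j-1)a_T+l}$, bounds this by $a_T\max_{l}\bigl|\sum_{j}\zeta_{2(j-1)a_T+l}\bigr|$, and applies the classical Fuk--Nagaev inequality to each inner sum of $d_T$ i.i.d.\ copies of $w_{i1}$. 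This sidesteps block-sum moment bounds entirely and is where the prefactor $a_T$ in the final display comes from. Your route---Fuk--Nagaev on $\sum_j\tilde V_j$ with $E|\tilde V_{1,i}|^q\le a_T^{q}\Delta_q$ by Jensen---is a legitimate alternative and in fact yields a polynomial term of order $p/(u^q d_T^{q-1})$ rather than $pa_T/(u^q d_T^{q-1})$ (your arithmetic slipped an extra $a_T$ in; redo $d_T a_T^q/(u d_T a_T)^q=1/(u^q d_T^{q-1})$), so it proves the stated upper bound a fortiori.

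There is one genuine gap in your write-up: the variance bound $\mathrm{Var}(\tilde V_{1,i})\lesssim a_T$ via ``absolute summability of covariances under $\beta$-mixing'' is not available here, because the lemma imposes no decay rate on $\beta(\cdot)$---the coefficient $\beta(a_T)$ only enters additively in the conclusion. The fix is immediate: use the same Jensen/power-mean bound at $q=2$, i.e.\ $E|\tilde V_{1,i}|^2\le a_T^{2}\Delta_2$, so that $\mathrm{Var}\bigl(\sum_j\tilde V_{j,i}\bigr)\le d_T a_T^{2}\Delta_2$. Substituting $x\asymp u d_T a_T$ then gives the exponential term $\exp(-c\,u^2 d_T)$ exactly as claimed. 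Your concern about the coupling is not a real obstacle: the paper simply couples coordinatewise and union-bounds over $i\in[p]$, which is where the factor $p$ in $2pd_T\beta(a_T)$ comes from; coupling the $\mathbb{R}^p$-valued blocks jointly (as you suggest) would save that $p$ but is not needed. Finally, your Step~4 is right in spirit; note that for both the polynomial and exponential terms to be $\le\delta/2$ one needs $u$ at least as large as each of the two thresholds, i.e.\ the $\vee$ of the two expressions (consistent with how $\lambda$ is chosen in Section~\ref{sec:asymptotics}); read the ``$\wedge$'' in the displayed $u$ accordingly.
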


\begin{proof}
See Appendix \ref{proof:fuknagaev}.
\end{proof}

\bibliography{ES_regression_WP.bib}

\begin{thebibliography}{}

\bibitem[Adamek et~al., 2023]{adamek2023lasso}
Adamek, R., Smeekes, S., and Wilms, I. (2023).
\newblock {Lasso Inference for High-Dimensional Time Series}.
\newblock {\em Journal of Econometrics}, 235(2):1114--1143.

\bibitem[Adrian and Brunnermeier, 2016]{adriancovar}
Adrian, T. and Brunnermeier, M.~K. (2016).
\newblock {CoVaR}.
\newblock {\em American Economic Review}, 106(7):1705--1741.

\bibitem[Andersen et~al., 2013]{andersen2013financial}
Andersen, T.~G., Bollerslev, T., Christoffersen, P.~F., and Diebold, F.~X.
  (2013).
\newblock {Financial Risk Measurement for Financial Risk Management}.
\newblock In {\em Handbook of the Economics of Finance}, volume~2, pages
  1127--1220. Elsevier.

\bibitem[Artzner et~al., 1999]{artzner1999coherent}
Artzner, P., Delbaen, F., Eber, J.-M., and Heath, D. (1999).
\newblock {Coherent Measures of Risk}.
\newblock {\em Mathematical Finance}, 9(3):203--228.

\bibitem[Babii et~al., 2022]{babii2022machine}
Babii, A., Ghysels, E., and Striaukas, J. (2022).
\newblock {Machine Learning Time Series Regressions with an Application to
  Nowcasting}.
\newblock {\em Journal of Business \& Economic Statistics}, 40(3):1094--1106.

\bibitem[Barendse, 2020]{barendse2020}
Barendse, S. (2020).
\newblock {Efficiently Weighted Estimation of Tail and Interquantile
  Expectations}.
\newblock {\em Tinbergen Institute Discussion Paper 2017-034/III}.

\bibitem[Basu and Michailidis, 2015]{basu2015}
Basu, S. and Michailidis, G. (2015).
\newblock {Regularized Estimation in Sparse High-Dimensional Time Series
  Models}.
\newblock {\em The Annals of Statistics}, 43(4):1535 -- 1567.

\bibitem[BCBS, 2016]{basel2016}
BCBS (2016).
\newblock {Minimum Capital Requirements for Market Risk}.
\newblock Technical report, Basel Committee on Banking Supervision.

\bibitem[Belloni et~al., 2023]{belloni2023high}
Belloni, A., Chen, M., Madrid~Padilla, O.~H., and Wang, Z. (2023).
\newblock {High-Dimensional Latent Panel Quantile Regression with an
  Application to Asset Pricing}.
\newblock {\em The Annals of Statistics}, 51(1):96--121.

\bibitem[Belloni and Chernozhukov, 2011]{belloni2011}
Belloni, A. and Chernozhukov, V. (2011).
\newblock {$\ell$1-Penalized Quantile Regression in High-Dimensional Sparse
  Models}.
\newblock {\em The Annals of Statistics}, 39(1):82--130.

\bibitem[Bickel et~al., 2009]{bickel2009simultaneous}
Bickel, P.~J., Ritov, Y., and Tsybakov, A.~B. (2009).
\newblock {Simultaneous Analysis of Lasso and Dantzig Selector}.

\bibitem[Campbell and Hentschel, 1992]{CAMPBELL1992}
Campbell, J.~Y. and Hentschel, L. (1992).
\newblock No news is good news: An asymmetric model of changing volatility in
  stock returns.
\newblock {\em Journal of Financial Economics}, 31(3):281--318.

\bibitem[Chernozhukov et~al., 2021]{chernozhukov2021lasso}
Chernozhukov, V., Karl~H{\"a}rdle, W., Huang, C., and Wang, W. (2021).
\newblock {Lasso-Driven Inference in Time and Space}.
\newblock {\em The Annals of Statistics}, 49(3):1702--1735.

\bibitem[Christoffersen, 2011]{christoffersen2011elements}
Christoffersen, P. (2011).
\newblock {\em {Elements of Financial Risk Management}}.
\newblock Academic press.

\bibitem[Fuk and Nagaev, 1971]{fuk1971probability}
Fuk, D.~K. and Nagaev, S.~V. (1971).
\newblock {Probability Inequalities for Sums of Independent Random Variables}.
\newblock {\em Theory of Probability \& Its Applications}, 16(4):643--660.

\bibitem[He et~al., 2023]{xuming2023}
He, X., Tan, K.~M., and Zhou, W.-X. (2023).
\newblock {Robust estimation and inference for expected shortfall regression
  with many regressors}.
\newblock {\em Journal of the Royal Statistical Society Series B: Statistical
  Methodology}, page qkad063.

\bibitem[Hsu et~al., 2008]{hsu2008subset}
Hsu, N.-J., Hung, H.-L., and Chang, Y.-M. (2008).
\newblock {Subset Selection for Vector Autoregressive Processes using Lasso}.
\newblock {\em Computational Statistics \& Data Analysis}, 52(7):3645--3657.

\bibitem[Kock and Callot, 2015]{kock2015oracle}
Kock, A.~B. and Callot, L. (2015).
\newblock {Oracle Inequalities for High Dimensional Vector Autoregressions}.
\newblock {\em Journal of Econometrics}, 186(2):325--344.

\bibitem[Masini et~al., 2022]{masini2022regularized}
Masini, R.~P., Medeiros, M.~C., and Mendes, E.~F. (2022).
\newblock {Regularized Estimation of High-Dimensional Vector Autoregressions
  with Weakly Dependent Innovations}.
\newblock {\em Journal of Time Series Analysis}, 43(4):532--557.

\bibitem[Medeiros and Mendes, 2016]{medeiros20161}
Medeiros, M.~C. and Mendes, E.~F. (2016).
\newblock {$\ell$1-Regularization of High-Dimensional Time-Series Models with
  Non-Gaussian and Heteroskedastic Errors}.
\newblock {\em Journal of Econometrics}, 191(1):255--271.

\bibitem[Nardi and Rinaldo, 2011]{nardi2011autoregressive}
Nardi, Y. and Rinaldo, A. (2011).
\newblock {Autoregressive Process Modeling via the Lasso Procedure}.
\newblock {\em Journal of Multivariate Analysis}, 102(3):528--549.

\bibitem[Uematsu and Tanaka, 2019]{uematsu2019high}
Uematsu, Y. and Tanaka, S. (2019).
\newblock {High-Dimensional Macroeconomic Forecasting and Variable Selection
  via Penalized Regression}.
\newblock {\em The Econometrics Journal}, 22(1):34--56.

\bibitem[Wang et~al., 2007]{wang2007regression}
Wang, H., Li, G., and Tsai, C.-L. (2007).
\newblock {Regression Coefficient and Autoregressive Order Shrinkage and
  Selection Via the Lasso}.
\newblock {\em Journal of the Royal Statistical Society Series B: Statistical
  Methodology}, 69(1):63--78.

\bibitem[White, 2001]{white2001asymptotic}
White, H. (2001).
\newblock {\em {Asymptotic Theory for Econometricians}}.
\newblock {Academic Press}, Cambridge, MA.

\bibitem[Wong et~al., 2020]{Wong2020}
Wong, K.~C., Li, Z., and Tewari, A. (2020).
\newblock {Lasso Guarantees for $\beta$-Mixing Heavy-Tailed Time Series}.
\newblock {\em Annals of Statistics}, 48(2):1124 – 1142.

\bibitem[Wu and Wu, 2016]{wu2016}
Wu, W.-B. and Wu, Y.~N. (2016).
\newblock {Performance Bounds for Parameter Estimates of High-Dimensional
  Linear Models with Correlated Errors}.
\newblock {\em Electronic Journal of Statistics}, 10(1):352 -- 379.

\bibitem[Yu, 1994]{yu1994rates}
Yu, B. (1994).
\newblock {Rates of Convergence for Empirical Processes of Stationary Mixing
  Sequences}.
\newblock {\em The Annals of Probability}, 22(1):94--116.

\end{thebibliography}

\appendix

\section{Appendix}

\subsection{Additional plots and simulation table}\label{appx:additional-plots}

\begin{figure}
\centering
\includegraphics[width=0.8\textwidth]{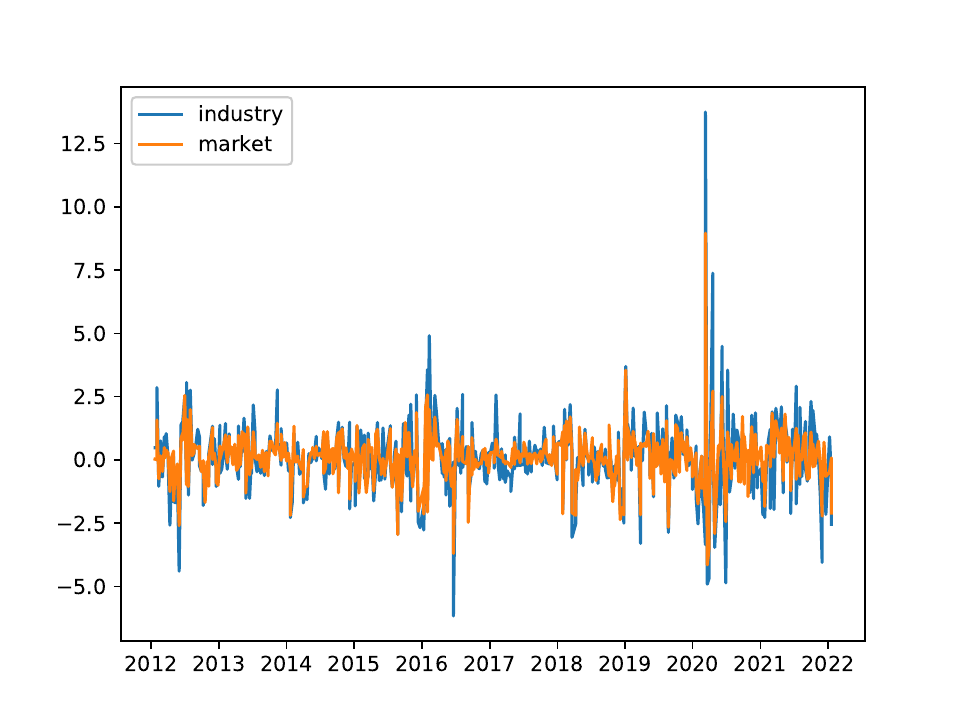}
\caption{Weekly market and industry return}
\label{fig:returns}
\end{figure}

\begin{figure}
\centering
\includegraphics[width=0.8\textwidth]{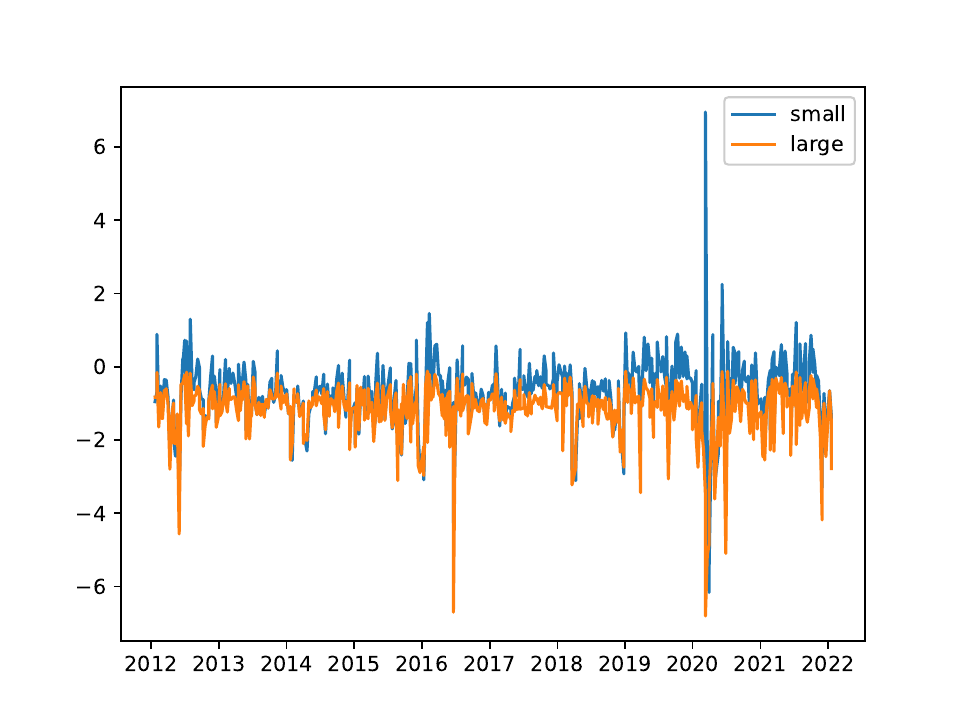}
\caption{ES predictions for $R^M_t$ conditional on $(R^I_t,Z_{t-1}')'$ at quantile level $\tau=0.025$. This plot shows predictions for $K=1$ (small) and $K=3$ (large) models.}
\label{fig:ES025}
\end{figure}

\begin{figure}
\centering
\includegraphics[width=0.8\textwidth]{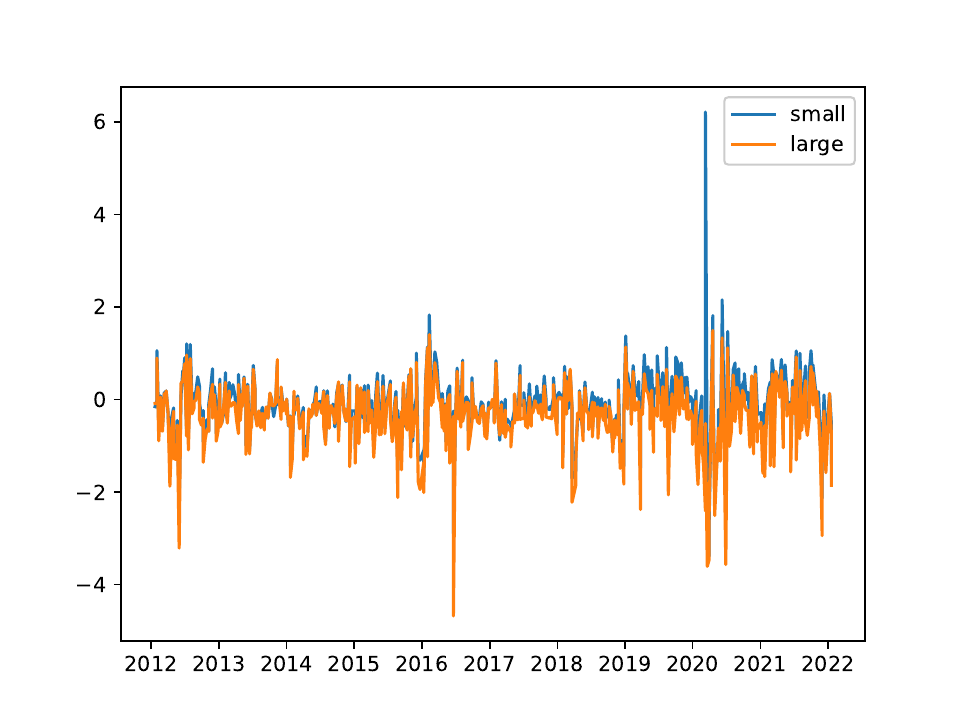}
\caption{ES predictions for $R^M_t$ conditional on $(R^I_t,Z_{t-1}')'$ at quantile level $\tau=0.5$. This plot shows predictions for $K=1$ (small) and $K=3$ (large) models.}
\label{fig:ES05}
\end{figure}

\begin{figure}
\centering
\includegraphics[width=0.8\textwidth]{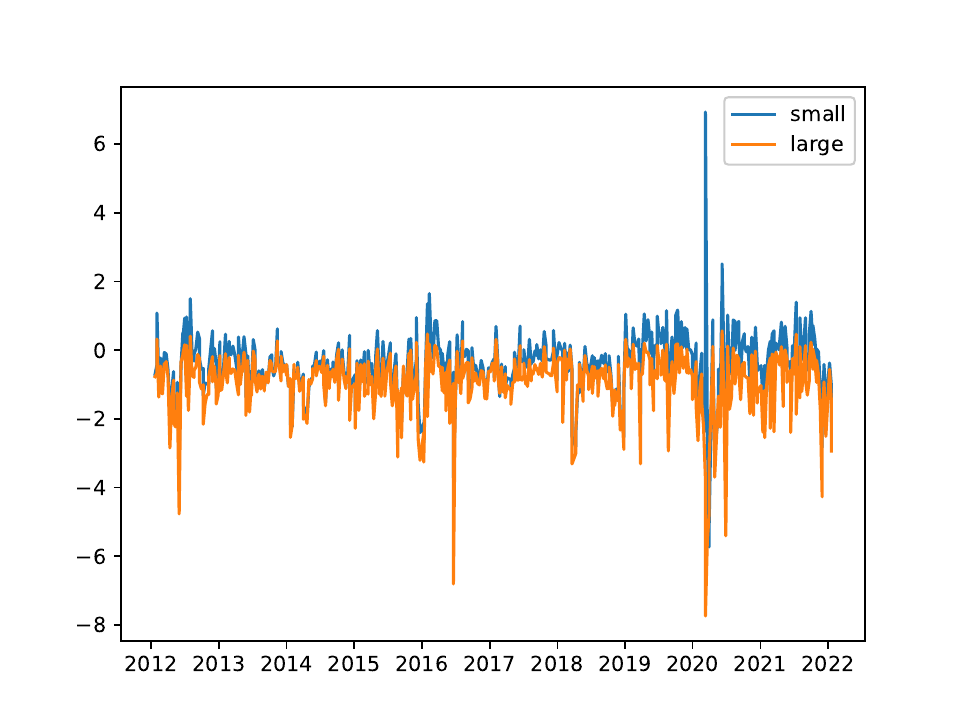}
\caption{VaR predictions for $R^M_t$ conditional on $(R^I_t,Z_{t-1}')'$ at quantile level $\tau=0.025$. This plot shows predictions for $K=1$ (small) and $K=3$ (large) models.}
\label{fig:Q025}
\end{figure}

\begin{figure}
\centering
\includegraphics[width=0.8\textwidth]{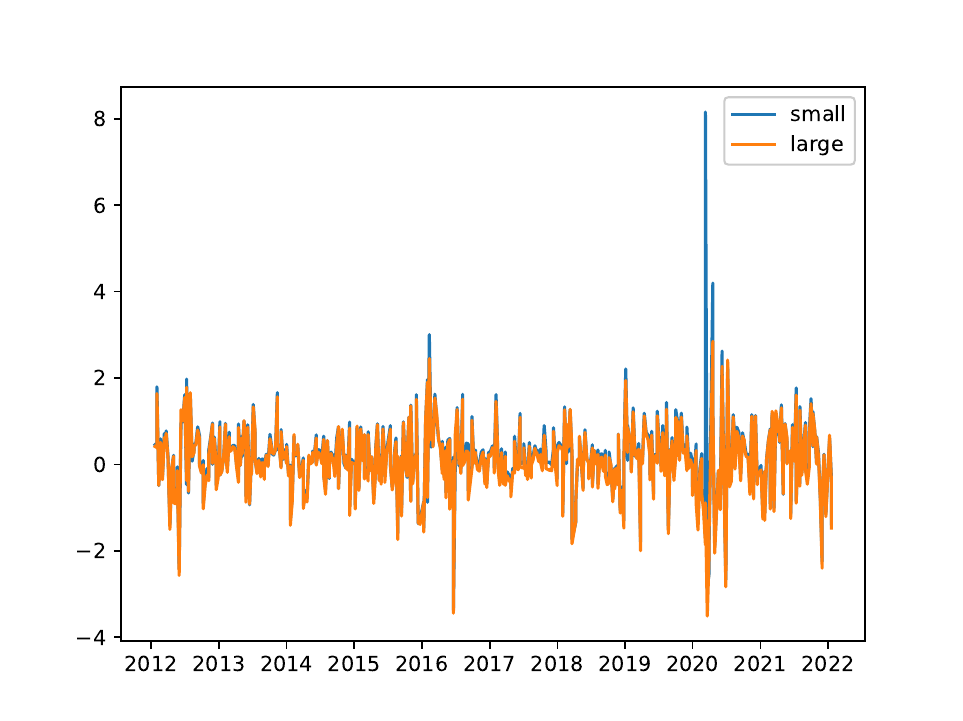}
\caption{VaR predictions for $R^M_t$ conditional on $(R^I_t,Z_{t-1}')'$ at quantile level $\tau=0.5$. This plot shows predictions for $K=1$ (small) and $K=3$ (large) models.}
\label{fig:Q05}
\end{figure}

\begin{figure}
\centering
\includegraphics[width=0.8\textwidth]{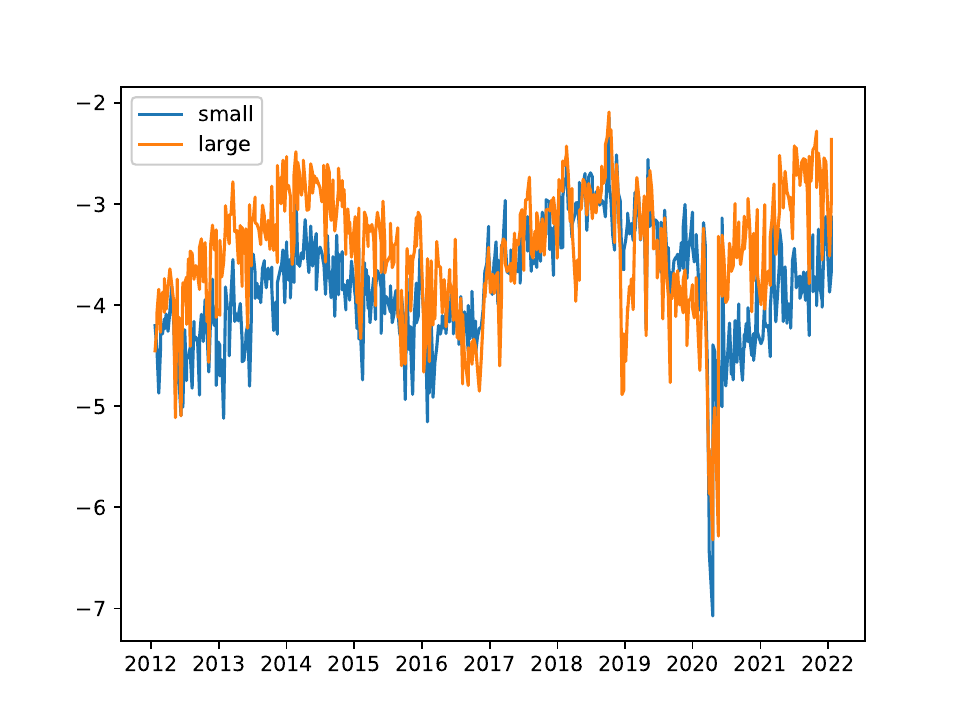}
\caption{VaR predictions for $R^I_t$ conditional on $Z_{t-1}$ at quantile level $\tau=0.025$. This plot shows predictions for $K=1$ (small) and $K=3$ (large) models.}
\label{fig:Q025-industry}
\end{figure}

\begin{figure}
\centering
\includegraphics[width=0.8\textwidth]{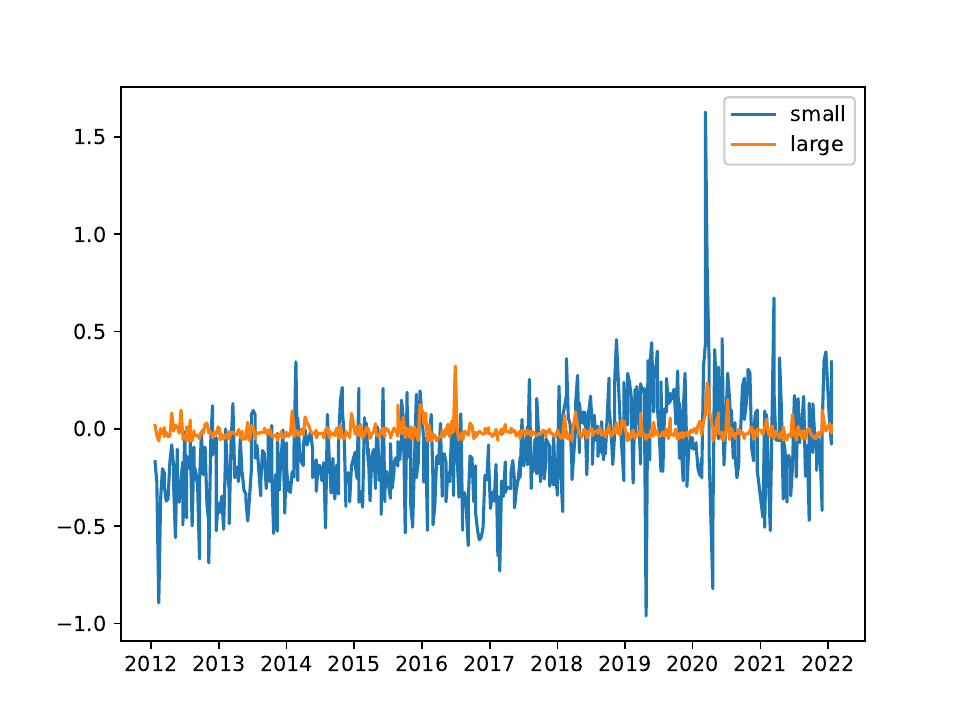}
\caption{VaR predictions for $R^I_t$ conditional on $Z_{t-1}$ at quantile level $\tau=0.5$. This plot shows predictions for $K=1$ (small) and $K=3$ (large) models.}
\label{fig:Q05-industry}
\end{figure}

% Table generated by Excel2LaTeX from sheet 'GENERATE TABLE'
\begin{table}[htbp]
\footnotesize
  \centering
  \caption{Simulation results for quantile estimator}
    \begin{tabular}{rlrrrrrr}
    \toprule
    \multicolumn{7}{c}{\textbf{Panel A: Average estimation error}}\\
    \midrule
          &       &       & \multicolumn{5}{c}{$K = 3$} \\
\cmidrule{4-8}          &       &       & \multicolumn{2}{c}{$\sigma_\nu = 1/4$} &       & \multicolumn{2}{c}{$\sigma_\nu = 1$} \\
\cmidrule{4-5}\cmidrule{7-8}          &       &       & $T =500$ & $1,000$ &       & $T =500$ & $1,000$ \\
\cmidrule{1-2}\cmidrule{4-5}\cmidrule{7-8}    \multicolumn{1}{l}{$\tau = 2.5\%$} &       &       &       &       &       &       &  \\
          & Penalized &       & 0.20  & 0.13  &       & 0.78  & 0.46 \\
          & Unpenalized &       & 0.51  & 0.38  &       & 2.73  & 1.97 \\
\cmidrule{1-2}\cmidrule{4-5}\cmidrule{7-8}    \multicolumn{1}{l}{$\tau = 10\%$} &       &       &       &       &       &       &  \\
          & Penalized &       & 0.06  & 0.05  &       & 0.15  & 0.10 \\
          & Unpenalized &       & 0.26  & 0.15  &       & 1.30  & 0.85 \\
    \midrule
          &       &       & \multicolumn{5}{c}{$K = 5$} \\
\cmidrule{4-8}          &       &       & \multicolumn{2}{c}{$\sigma_\nu = 1/4$} &       & \multicolumn{2}{c}{$\sigma_\nu = 1$} \\
\cmidrule{4-5}\cmidrule{7-8}          &       &       & $T =500$ & $1,000$ &       & $T =500$ & $1,000$ \\
\cmidrule{1-2}\cmidrule{4-5}\cmidrule{7-8}    \multicolumn{1}{l}{$\tau = 2.5\%$} &       &       &       &       &       &       &  \\
          & Penalized &       & 0.30  & 0.23  &       & 0.75  & 0.70 \\
          & Unpenalized &       & 3.53  & 2.32  &       & 15.68 & 9.59 \\
\cmidrule{1-2}\cmidrule{4-5}\cmidrule{7-8}    \multicolumn{1}{l}{$\tau = 10\%$} &       &       &       &       &       &       &  \\
          & Penalized &       & 0.04  & 0.04  &       & 0.12  & 0.12 \\
          & Unpenalized &       & 2.00  & 0.81  &       & 10.68 & 4.49 \\
\midrule
    \multicolumn{7}{c}{\textbf{Panel B: Average prediction error}}\\
    \midrule
      &       &       & \multicolumn{5}{c}{$K = 3$} \\
\cmidrule{4-8}      &       &       & \multicolumn{2}{c}{$\sigma_\nu = 1/4$} &       & \multicolumn{2}{c}{$\sigma_\nu = 1$} \\
\cmidrule{4-5}\cmidrule{7-8}      &       &       & $T =500$ & $1,000$ &       & $T =500$ & $1,000$ \\
\cmidrule{1-2}\cmidrule{4-5}\cmidrule{7-8}\multicolumn{1}{l}{$\tau = 2.5\%$} &       &       &       &       &       &       &  \\
      & Penalized &       & 0.04  & 0.04  &       & 0.10  & 0.08 \\
      & Unpenalized &       & 0.05  & 0.04  &       & 0.12  & 0.09 \\
\cmidrule{1-2}\cmidrule{4-5}\cmidrule{7-8}\multicolumn{1}{l}{$\tau = 10\%$} &       &       &       &       &       &       &  \\
      & Penalized &       & 0.10  & 0.10  &       & 0.23  & 0.23 \\
      & Unpenalized &       & 0.11  & 0.10  &       & 0.25  & 0.22 \\
\midrule
      &       &       & \multicolumn{5}{c}{$K = 5$} \\
\cmidrule{4-8}      &       &       & \multicolumn{2}{c}{$\sigma_\nu = 1/4$} &       & \multicolumn{2}{c}{$\sigma_\nu = 1$} \\
\cmidrule{4-5}\cmidrule{7-8}      &       &       & $T =500$ & $1,000$ &       & $T =500$ & $1,000$ \\
\cmidrule{1-2}\cmidrule{4-5}\cmidrule{7-8}\multicolumn{1}{l}{$\tau = 2.5\%$} &       &       &       &       &       &       &  \\
      & Penalized &       & 0.06  & 0.04  &       & 0.12  & 0.09 \\
      & Unpenalized &       & 0.12  & 0.06  &       & 0.26  & 0.14 \\
\cmidrule{1-2}\cmidrule{4-5}\cmidrule{7-8}\multicolumn{1}{l}{$\tau = 10\%$} &       &       &       &       &       &       &  \\
      & Penalized &       & 0.11  & 0.10  &       & 0.24  & 0.24 \\
      & Unpenalized &       & 0.15  & 0.12  &       & 0.35  & 0.26 \\
    \bottomrule
    \end{tabular}%
  \label{tab:QR-sim}%
    \begin{tablenotes}
  \item This table displays the simulation average of the estimation error of the quantile estimator, as measured by $\|\hat\alpha - \alpha^0\|_1$, in Panel A and the simulation average of the prediction error of the quantile estimator in Panel B. The prediction error is measured by the mean tick-loss (MTL) function (see Section \ref{sec:emp:results}). Results are included for the penalized and unpenalized estimator. The penalty is chosen by cross-validation in the penalized case and set to zero in the unpenalized case.
  \end{tablenotes}
\end{table}%

\subsection{On Condition \ref{cond:Variances}}\label{appx:conditionvariances}
For the case $\sigma_i = d_j$, $d_j\neq 1$, let $\tilde{X} = X D^{-1}$, where $D = \text{diag}(d_1,\ldots,d_p)$. The regressor matrix $\tilde{X}$ does satisfy Condition \ref{cond:Variances} and the equivalent model to Equation (\ref{eq:linearmodel}) is $Y_t = \tilde{X}_t' D \alpha^0(U_t)$. Consequently, $ES_t(\tau) = \tilde{X}_t' \tilde\gamma^0(\tau)$, where $\tilde\gamma^0(\tau) = D\gamma_0(\tau)$. Since 
\begin{align*}
\frac{1}{n}\|\hat{Y} - X\gamma\|_2^2 + \lambda \sum_{j=1}^p \hat\sigma_i |\gamma_i| = \frac{1}{n}\|\hat{Y} - \tilde{X} \tilde\gamma\|_2^2 + \lambda \sum_{j=1}^p \tilde{\sigma}_j|\tilde{\gamma}_j|,
\end{align*}
where $\tilde{\sigma}_j^2 := \frac{1}{T}\sum_{t=1}^T \tilde{X}_{jt}^2 = \frac{\hat\sigma_i^2}{d_j^2}$, it follows that 
\[\hat{\tilde{\gamma}}:= \argmin_{\tilde{\gamma}} \frac{1}{n}\|\hat{Y} - \tilde{X}\tilde{\gamma}\|_2^2 + \lambda \sum_{j=1}^p \tilde{\sigma}_j|\tilde{\gamma}_j| = D\hat{\gamma}.\]
Moreover, $\hat{\tilde{\gamma}}$ satisfies the bounds established in the paper. Hence, the estimation error in the alternative case satisfies $\|\hat{\gamma} - \gamma_0\|_1 = \|D^{-1}(\hat{\tilde{\gamma}} - \tilde{\gamma}_0)\|_1 \leq 
\|D^{-1}\|_\infty \|\hat{\tilde{\gamma}} - \tilde{\gamma}_0\|_1$.

\subsection{Conditions of BCMPW}\label{appx:belloni}
Consider the DGP in (\ref{eq:linearmodel}). We impose the following assumptions.

\begin{customcond}{B1}\label{cond:B1}
There exists a $\underline{f}\in(0,1]$ satisfying
\[\inf_{x\in\mathcal{X}} f_{Y_t | X_t}\left(x'\alpha^0|x\right) > \underline{f},\]
where $f_{Y_t | X_t}$ denotes the density of $Y_t$ conditional on $X_t$. Furthermore, $f_{Y_t | X_t}\left(y|x\right)$ and $\frac{\partial}{\partial y}f_{Y_t | X_t}\left(y|x\right)$ are both bounded by $\bar{f}$ and $\bar{f}'$, respectively, uniformly in $y$ and $x$ in their respective supports.
\end{customcond}

% X_t'\xi + \left(X_t'\zeta\right) \nu_t

As an example of a conditional distribution of $Y_t$ given $X_t$, consider the location-scale model $Y_t = E[Y_t|X_t] + \text{std}(Y_t|X_t) \nu_t$, where $E[Y_t|X_t] = X_t'\xi$ and $\text{std}(Y_t|X_t) = X_t'\zeta$, with location and scale parameter vectors $\xi \in \mathbb{R}^p$ and $\zeta \in \mathbb{R}^p$, respectively. We impose positive standard deviations and independent errors. Specifically, (i) $\inf_{x\in\mathcal{X}} x'\zeta > 0$; (ii) $\nu_t \sim iid \,\, F_\nu$ for some distribution $F_\nu$ with continuously differentiable cdf and $\nu_t$ independent of $X_t$ and its past. Clearly, this model can be expressed in terms of the model in Equation (\ref{eq:linearmodel}) as follows: $Y_t = X_t'\alpha^0(U_t)$, where $\alpha^0(u) = \xi + F_\nu^{-1}(u)\zeta$, with $F_\nu^{-1}(\cdot)$ denoting the inverse cdf. It follows the conditional pdf can be expressed as $f_{Y_t|X_t}(y|x) = \frac{1}{x'\zeta} f_\nu\left((y - x'\xi) / (x'\zeta) \right)$. 

We let $S_{\alpha,0}$ denote the sparsity set of $\alpha^0$, and let $s_{\alpha,0}$ denote the cardinality of $S_{\alpha,0}$. Here, we specifically allow for different sparsity sets for the quantile and ES model parameters, respectively, $\alpha^0$ and $\gamma^0$. Moreover, we define the cone $A^\alpha = \{\alpha : \|\alpha_{S_{\alpha,0}^c}\|_1 \leq C_0 \|\alpha_{S_{\alpha,0}}\|_1\}$, for some positive constant $C_0$.

\begin{customcond}{B2}\label{cond:B2}
The following conditions are satisfied.
\begin{enumerate}
  \item 
For any $\alpha\in A^\alpha$, let
\[J_\tau(\alpha) := \underline{f} E\left[ \left(X_t'\alpha\right)^2 \right] = \underline{f} \alpha'\Sigma\alpha.\]
It holds 
\[0 < \kappa_0:= \inf_{\alpha \in A^\alpha,\alpha \neq 0}\frac{J_\tau^{1/2}(\alpha)}{\left\|\alpha_{S_{\alpha,0}}\right\|_2};
\]
\item
Moreover, we assume the following holds
\[0 < q:= \frac{3}{8}\frac{\underline{f}^{3/2}}{\bar{f}'} \inf_{\alpha\in A^\alpha,\alpha\neq 0} \frac{E\left(\sum_{t=1}^T (X_t'\alpha)^2 \right)^{3/2}}{E\left(\sum_{t=1}^T |X_t'\alpha|^3 \right)}.\]
\end{enumerate}
\end{customcond}

\begin{customcond}{B3}\label{cond:belloni:rates}
Suppose
\
\begin{enumerate}
  \item $q \geq \frac{C}{{\kappa_0 \underline{f}^{1/2}}} (\psi_T\sqrt{a_T \log(p a_T)}\sqrt{s_{\alpha,0} + 1})$, for some universal constant $C>0$ and any sequence $\{\psi_T\}$ that satisfies \[\psi_T / \left(\sqrt{\underline{f}} \log(a_T+1)\right)\rightarrow \infty;\]
  \item $\frac{\psi_T \sqrt{(1+s_{\alpha,0})a_T\log(pa_T)}}{\sqrt{d_T}\kappa_0 \underline{f}} = o(1)$;
    \item $\frac{\underline{f} s_{\alpha,0}}{\kappa_0^2} \frac{(pa_T)^{2/q}}{d_T^{(q-1)/q}} = o(1)$. %take $\delta_T^{-1} = p a_T$
  \end{enumerate}
\end{customcond}

We note that Conditions \ref{cond:B2}.1 and \ref{cond:belloni:rates} are closely related to \ref{cond:REC} and \ref{cond:rates}. Indeed, Condition \ref{cond:B2}.1 can be rewritten as $\|\alpha_{S_{\alpha,0}}\|_1^2 \leq s_{\alpha,0}\|\alpha_{S_{\alpha,0}}\|_2^2 \leq \frac{s_{\alpha,0}\underline{f}}{\kappa_0^2} \alpha'\Sigma \alpha$, for any $\alpha \in A_\alpha$, which is similar to Condition \ref{cond:REC}. Moreover, if we assume $\kappa_0$ and $\underline{f}$ are universal constants (like $\phi_0$ in \ref{cond:REC}), then the rates in \ref{cond:belloni:rates}.2 and \ref{cond:belloni:rates}.3 no longer depend on $\kappa_0$ and $\underline{f}$, and are similar to those in Condition \ref{cond:rates} (ignoring $\psi_T$). Condition \ref{cond:B2}.2 represents the restricted nonlinear impact condition introduced in \citet{belloni2011} and is standard in the $\ell_1$-penalized quantile regression literature.

\subsection{Lemma \ref{lem:nonasymp-prob}}
In this section we allow for a more general penalty level 
\[\lambda \asymp \frac{(pa_T/\delta_T)^{1/q}}{d_T^{(q-1)/q}}\vee \frac{\sqrt{\log(pa_T/\delta_T)}}{\sqrt{d_T}},\]
 where $\{\delta_T\}$ is some deterministic sequence of constants on the interval $(0,1)$. In the main text we consider the special case $\delta_T = 1/(pa_T)$.

\begin{mylem}\label{lem:nonasymp-prob}
Let Conditions \ref{cond:Variances}, \ref{cond:REC}, \ref{cond:mixing}, and \ref{cond:moments} be satisfied.  Let $(a_T,d_T)$ be a blocking strategy satisfying $T/2 - a_T \leq d_T a_T \leq T/2$. If $\lambda_0 \leq \frac{6C_0 - 10}{15(1+C_0)} \lambda$ and $\lambda_1 \leq \frac{\phi_0^2}{2s_0 (1+C_0)^2}$, there exist universal constants $C_1,C_2,C_3>0$, such that 
  \[P(\mathcal{T}  \cap \mathcal{S} \cap \mathcal{U}) = 1 - A_\mathcal{S} - C_1\delta_T
  - A_\mathcal{U} - 4pd_T\bar\beta(a_T) - 2p^2d_T\bar\beta(a_T),\]
  where 
  \begin{align*}
  A_\mathcal{S} &:= 3 p a_T \Biggl(
\frac{C_2}{d_T^{q-1}}
+  \exp\bigl(-C_3 d_T\bigr)
\Biggr)\ \ \text{ and } \ \ 
%   A_\mathcal{T} &:=
% 6 p a_T \Biggl(
% \frac{C_q^{(1)}}{\lambda^q d_T^{q-1}}
% + 2 \exp\bigl(-C_q^{(2)}\lambda^2 d_T\bigr)
% \Biggr) + 
% 4pd_T\beta(a_T),\\
A_\mathcal{U} :=
3 p^2 a_T \Biggl(
\frac{C_2 s_0^q}{ d_T^{q-1}}
+  \exp\bigl(-C_3 d_T/s_0^2\bigr)
\Biggr).
  \end{align*}
\end{mylem}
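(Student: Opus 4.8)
The plan is to bound the complementary event by a union bound, $P(\mathcal{T}\cap\mathcal{S}\cap\mathcal{U}) \ge 1 - P(\mathcal{T}^c) - P(\mathcal{S}^c) - P(\mathcal{U}^c)$, and to treat each of the three terms with the Fuk--Nagaev inequality of Lemma \ref{lem:fuknagaev}, whose blocking-strategy hypothesis $T/2 - a_T \le d_T a_T \le T/2$ is assumed here. The structural fact used throughout is that, under Condition \ref{cond:Variances}, $E[X_{it}^2] = 1$ for every $i$ and $t$, so the centered sequences appearing below are immediately identified.

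For $P(\mathcal{T}^c)$, note that $\mathcal{T} = \{\max_{i\in[p]} | \tfrac{1}{T} \sum_{t=1}^T X_{it}\varepsilon_t | \le \lambda_0/2 \}$. I would apply Lemma \ref{lem:fuknagaev} to the $p$-dimensional sequence $w_t = (X_{1t}\varepsilon_t,\ldots,X_{pt}\varepsilon_t)'$, which is strictly stationary and $\beta$-mixing with coefficient bounded by $\bar\beta$ by Condition \ref{cond:mixing}, which is centered since $E[X_{it}\varepsilon_t] = E[X_{it}\,E[\varepsilon_t\mid X_t]] = 0$, and which has uniformly bounded $q$-th moments by Condition \ref{cond:moments}. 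Because the penalty levels are calibrated so that $\lambda_0 \asymp \lambda \asymp \frac{(pa_T/\delta_T)^{1/q}}{d_T^{(q-1)/q}} \vee \frac{\sqrt{\log(pa_T/\delta_T)}}{\sqrt{d_T}}$, the threshold $\lambda_0/2$ dominates the value of $u$ appearing in the prescribed-$u$ form of Lemma \ref{lem:fuknagaev} with $\delta = \delta_T$; since raising the threshold only shrinks the event, this yields $P(\mathcal{T}^c) \le C_1 \delta_T + 2 p d_T \bar\beta(a_T)$ for a universal constant $C_1$.

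For $P(\mathcal{S}^c)$ and $P(\mathcal{U}^c)$ I would use the first (explicit) form of Lemma \ref{lem:fuknagaev}. For $\mathcal{S}$, since $\hat\sigma_i + 1 \ge 1$ one has $|\hat\sigma_i - 1| \le |\hat\sigma_i^2 - 1|$, so $\mathcal{S}^c \subseteq \{\max_i | \tfrac{1}{T}\sum_t X_{it}^2 - 1 | > 1/4\}$; the sequence $(X_{1t}^2 - 1,\ldots,X_{pt}^2-1)'$ is centered by Condition \ref{cond:Variances}, strictly stationary and $\beta$-mixing with coefficient $\bar\beta$ by Condition \ref{cond:mixing} with $i=j$, and has bounded $q$-th moments because $E|X_{i1}^2 - 1|^q \le 2^{q-1}(E X_{i1}^{2q} + 1) < \infty$ by Condition \ref{cond:moments}. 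Applying Lemma \ref{lem:fuknagaev} with $u = 1/4$ and absorbing the numerical factors and the moment bound into universal constants gives $P(\mathcal{S}^c) \le A_\mathcal{S} + 2 p d_T \bar\beta(a_T)$. For $\mathcal{U}$ the same reasoning is applied to the $p^2$-dimensional array indexed by pairs $(i,j)$ with entries $X_{it}X_{jt} - E[X_{it}X_{jt}]$, which inherits centering, strict stationarity, $\beta$-mixing with coefficient $\bar\beta$, and, by the triangle inequality with Condition \ref{cond:moments}, $q$-th moments bounded by $2^q \max_{i,j}E|X_{i1}X_{j1}|^q$; invoking Lemma \ref{lem:fuknagaev} with $p^2$ in place of $p$ and $u = \lambda_1 \asymp 1/s_0$ (the largest value permitted by the hypothesis, which is positive by $\phi_0^2>0$ from Condition \ref{cond:REC}), the factor $u^{-q}$ produces the $s_0^q$ in the polynomial term of $A_\mathcal{U}$ and $u^2 d_T$ the $d_T/s_0^2$ in its exponential term, so that $P(\mathcal{U}^c) \le A_\mathcal{U} + 2 p^2 d_T \bar\beta(a_T)$.

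Summing the three bounds and recalling that $2pd_T\bar\beta(a_T)$ appears once from $\mathcal{T}$ and once from $\mathcal{S}$ yields the stated expression $1 - A_\mathcal{S} - C_1\delta_T - A_\mathcal{U} - 4pd_T\bar\beta(a_T) - 2p^2 d_T\bar\beta(a_T)$. I expect the only genuine obstacle to be bookkeeping: one must check that the threshold $\lambda_0$ forced by the hypotheses is large enough for the prescribed-$u$ form of Lemma \ref{lem:fuknagaev} to deliver probability level $\delta_T$ (which is precisely why $\lambda$, hence $\lambda_0$, is tied to $\frac{(pa_T/\delta_T)^{1/q}}{d_T^{(q-1)/q}}\vee\frac{\sqrt{\log(pa_T/\delta_T)}}{\sqrt{d_T}}$), that $\lambda_1$ is taken at a fixed fraction of its largest admissible value so $\lambda_1\asymp 1/s_0$, and that the $p^2$-dimensional vector of cross-products legitimately inherits the mixing bound $\bar\beta$ so that Lemma \ref{lem:fuknagaev} applies to it verbatim.
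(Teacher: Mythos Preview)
Your proposal is correct and follows essentially the same route as the paper: a union bound over $\mathcal{S}^c,\mathcal{T}^c,\mathcal{U}^c$, each handled by Lemma~\ref{lem:fuknagaev} applied to the appropriate (centered, stationary, $\beta$-mixing) sequence, with the thresholds $1/4$, $\lambda_0\asymp\lambda$, and $\lambda_1\asymp 1/s_0$ producing exactly the terms $A_\mathcal{S}$, $C_1\delta_T$, and $A_\mathcal{U}$. Your treatment of $\mathcal{S}$ via the elementary inequality $|\hat\sigma_i-1|\le|\hat\sigma_i^2-1|$ (from $\hat\sigma_i+1\ge 1$) is in fact cleaner than the paper's mean-value-theorem step, and your final caveat about $\lambda_0$ and $\lambda_1$ being taken at fixed fractions of their admissible upper bounds is exactly the implicit convention the paper uses.
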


\begin{proof}
See Appendix \ref{proof:nonasymp-prob}.
\end{proof}

\subsection{Proof of Lemma \ref{lem:nonasymp-bound}}\label{proof:nonasymp-bound}

We have the Basic Inequality
\[\frac{1}{T}\|\hat{Y} - X\hat{\gamma}\|_2^2 +  \lambda \|\hat{\gamma}\|_{1,T} \leq 
\frac{1}{T}\|\hat{Y} - X\gamma^0\|_2^2 +  \lambda  \|\gamma^0\|_{1,T},\]
and rewrite
\begin{align*}
% &\frac{1}{T}\|\hat{Y} -\tilde{Y} + \tilde{Y} - X\gamma^0 - X(\hat{\gamma}-\gamma^0) \|_2^2 +  \lambda \|\hat{\gamma}\|_{1,T}  \leq
% \frac{1}{T}\|\hat{Y} -\tilde{Y} + \varepsilon^0\|_2^2 +  \lambda  \|\gamma^0\|_{1,T}\\
% &\iff\\
&\frac{1}{T}\|\hat{Y} -\tilde{Y} + \varepsilon - X(\hat{\gamma}-\gamma^0) \|_2^2 +  \lambda \|\hat{\gamma}\|_{1,T} \leq
\frac{1}{T}\|\hat{Y} -\tilde{Y} + \varepsilon\|_2^2 +  \lambda  \|\gamma^0\|_{1,T}.
\end{align*}

We expand both sides to obtain
\begin{align*}
% &\frac{1}{T}\|\hat{Y} -\tilde{Y}\|_2^2 + \frac{1}{T}\|\varepsilon\|_2^2 + \frac{1}{T}\|X(\hat{\gamma}-\gamma^0) \|_2^2  + 2\varepsilon'(\hat{Y} -\tilde{Y}) + 2\varepsilon'X(\hat{\gamma}-\gamma^0) + 2 (\hat{Y} -\tilde{Y})'X(\hat{\gamma}-\gamma^0) +  \lambda \|\hat{\gamma}\|_{1,T} \\
% &\quad \leq \frac{1}{T}\|\hat{Y} -\tilde{Y}\|_2^2 + \frac{1}{T}\|\varepsilon\|_2^2 + 2\varepsilon'(\hat{Y} -\tilde{Y}) + \lambda  \|\gamma^0\|_{1,T}\\
% \iff\\
&\frac{1}{T}\|X(\hat{\gamma}-\gamma^0) \|_2^2  + \frac{2}{T}\varepsilon'X(\hat{\gamma}-\gamma^0) + \frac{2}{T} (\hat{Y} -\tilde{Y})'X(\hat{\gamma}-\gamma^0) +  \lambda \|\hat{\gamma}\|_{1,T} \leq  \lambda  \|\gamma^0\|_{1,T}\\
\implies\\
&\frac{1}{T}\|X(\hat{\gamma}-\gamma^0) \|_2^2  +  \lambda \|\hat{\gamma}\|_{1,T} \leq \max_{1\leq j \leq p} \frac{2}{T}\left|\varepsilon'X_j \right|\|\hat{\gamma}-\gamma^0\|_1  + \frac{2}{\sqrt{T}}\|\hat{Y} -\tilde{Y}\|_2\frac{1}{\sqrt{T}}\|X(\hat{\gamma}-\gamma^0)\|_2  + \lambda  \|\gamma^0\|_{1,T}.
\end{align*}

Let $\lambda_2 := 2 \|\hat{Y} - \tilde{Y}\|_2/\sqrt{T}$.

On $\mathcal{T}$, and by the Triangle Inequality,
\begin{align*}
&\frac{1}{T}\|X(\hat{\gamma}-\gamma^0) \|_2^2  + \lambda \|\hat{\gamma}-\gamma^0\|_{1,T} \leq ( \lambda+\lambda_0) \|\hat{\gamma}-\gamma^0\|_1  + \lambda_2 \frac{1}{\sqrt{T}} \|X(\hat{\gamma}-\gamma^0)\|_2  + \lambda \left( \|\gamma^0\|_{1,T} - \|\hat{\gamma}\|_{1,T} \right).
\end{align*}

On $\mathcal{S}$, and by the Triangle Inequality, the right hand side is upper-bounded by
\begin{align*}
&(\lambda +\lambda_0)\|\hat{\gamma}-\gamma^0\|_1  + \lambda_2 \frac{1}{\sqrt{T}} \|X(\hat{\gamma}-\gamma^0)\|_2  + \lambda \|\gamma^0-\hat{\gamma}\|_{1,T}\\
&\quad \leq
(\lambda+\lambda_0) \|\hat{\gamma}-\gamma^0\|_1  + \lambda_2 \frac{1}{\sqrt{T}} \|X(\hat{\gamma}-\gamma^0)\|_2  + \frac{4}{3}\lambda \|\hat{\gamma}-\gamma^0\|_1\\
&\quad =
 \left(\frac{7}{3}\lambda + \lambda_0\right) \|\hat{\gamma}-\gamma^0\|_1  + \lambda_2 \frac{1}{\sqrt{T}} \|X(\hat{\gamma}-\gamma^0)\|_2
\end{align*}
and $\frac{1}{T}\|X(\hat{\gamma}-\gamma^0) \|_2^2  + \frac{4}{5} \lambda \|\hat{\gamma}-\gamma^0\|_1$ is upper-bounded by the LHS,
where we have used $\frac{3}{4}\|\gamma\|_{1,T} \leq \|\gamma\|_{1} \leq \frac{5}{4}\|\gamma\|_{1,T}$, for any $\gamma\in \mathbb{R}^p$, by Condition \ref{cond:Variances}. We obtain
\[\frac{1}{T}\|X(\hat{\gamma}-\gamma^0) \|_2^2  + \frac{4}{5} \lambda \|\hat{\gamma}-\gamma^0\|_1 \leq  \left(\frac{7}{3}\lambda + \lambda_0\right) \|\hat{\gamma}-\gamma^0\|_1  + \lambda_2 \frac{1}{\sqrt{T}} \|X(\hat{\gamma}-\gamma^0)\|_2.\]

We consider three cases and suppose that $\hat\gamma - \gamma^0 \in A$ in Cases 1 and 3. This condition will be established in the final part of the proof.

Case 1: $\lambda_2 \frac{1}{\sqrt{T}} \|X(\hat{\gamma}-\gamma^0)\|_2 \leq \lambda_0  \|\hat{\gamma}-\gamma^0\|_1$. Then, 
\begin{align*}
\frac{1}{T}\|X(\hat{\gamma}-\gamma^0) \|_2^2  + \frac{4}{5} \lambda \|\hat{\gamma}-\gamma^0\|_1 
&\leq 
\left(\frac{7}{3}\lambda + 2\lambda_0\right) \|\hat{\gamma}-\gamma^0\|_1,
\end{align*}
where the RHS is bounded by
\[
\left(\frac{7}{3}\lambda + 2\lambda_0\right) (1 + C_0) \frac{\sqrt{2s_0}}{\phi_0} \|X(\hat{\gamma}-\gamma^0)\|_2,
\]
by the restricted eigenvalue condition (\ref{cond:REC}) on $\mathcal{U}$.

Indeed, for any $\gamma \in A$, under the condition REC and on $\mathcal{U}$,
\begin{align*}
\left|\frac{\gamma'\hat\Sigma\gamma}{\gamma'\Sigma\gamma} - 1\right|
\leq 
\frac{1}{\gamma'\Sigma\gamma} \left|\gamma'(\hat\Sigma - \Sigma)\gamma \right|
\leq 
\frac{1}{\gamma'\Sigma\gamma} \left\|\gamma\right\|_1^2 \|\hat\Sigma - \Sigma\|_\infty
\leq 
\frac{(1+C_0)^2s_0}{\phi_0^2} \lambda_1 \leq \frac{1}{2},
\end{align*}
such that $\frac{2}{3} \gamma'\hat\Sigma\gamma \leq \gamma'\Sigma\gamma \leq 2 \gamma'\hat\Sigma\gamma$. %\leq 3 \alpha'\Sigma\alpha$.

Finally, utilizing in the second inequality that $2uv \leq u^2 + v^2$ for any reals $u$ and $v$, we obtain
\begin{align*}
\frac{1}{T}\|X(\hat{\gamma}-\gamma^0) \|_2^2  + \frac{4}{5} \lambda \|\hat{\gamma}-\gamma^0\|_1 
\leq \frac{1}{T}\|X(\hat{\gamma}-\gamma^0) \|_2^2  + \frac{8}{5} \lambda \|\hat{\gamma}-\gamma^0\|_1 
\leq 2 \Bigl(\frac{7}{3}\lambda +2\lambda_0\Bigr)^2(1 + C_0)^2 \frac{s_0}{\phi_0^2}.
\end{align*}

Case 2: $\lambda_2 \frac{1}{\sqrt{T}} \|X(\hat{\gamma}-\gamma^0)\|_2 > \lambda_0 \|\hat{\gamma}-\gamma^0\|_1$ and $2\lambda_2 \geq \frac{1}{\sqrt{T}} \|X(\hat{\gamma}-\gamma^0)\|_2$. Then,
\begin{align*}
\frac{1}{T}\|X(\hat{\gamma}-\gamma^0) \|_2^2  + \frac{4}{5} \lambda \|\hat{\gamma}-\gamma^0\|_1
&\leq \left(\frac{7}{3}\lambda + \lambda_0\right) \|\hat{\gamma}-\gamma^0\|_1  + \lambda_2 \frac{1}{\sqrt{T}} \|X(\hat{\gamma}-\gamma^0)\|_2\\ 
&\leq \left(\left(\frac{7}{3}\lambda + \lambda_0\right)/\lambda_0 + 1 \right) \lambda_2 \frac{1}{\sqrt{T}} \|X(\hat{\gamma}-\gamma^0)\|_2 \leq \left(\left(\frac{7}{3}\lambda + \lambda_0\right)/\lambda_0 + 1 \right) \lambda_2^2.
\end{align*}

Case 3: $\lambda_2 \frac{1}{\sqrt{T}} \|X(\hat{\gamma}-\gamma^0)\|_2 > \lambda_0 \|\hat{\gamma}-\gamma^0\|_1$ and $2 \lambda_2 < \frac{1}{\sqrt{T}} \|X(\hat{\gamma}-\gamma^0)\|_2$. Then,
\begin{align*}
\frac{1}{T}\|X(\hat{\gamma}-\gamma^0) \|_2^2  + \lambda \|\hat{\gamma}-\gamma^0\|_1 
&\leq \left(\frac{7}{3}\lambda + \lambda_0\right) \|\hat{\gamma}-\gamma^0\|_1  + \lambda_2 \frac{1}{\sqrt{T}} \|X(\hat{\gamma}-\gamma^0)\|_2\\
&\leq \left(\frac{7}{3}\lambda + \lambda_0\right) \|\hat{\gamma}-\gamma^0\|_1  + \frac{1}{2}\frac{1}{T} \|X(\hat{\gamma}-\gamma^0)\|_2^2.
\end{align*}
Hence, under Condition \ref{cond:REC} and on $\mathcal{U}$,
\begin{align*}
\frac{1}{T}\|X(\hat{\gamma}-\gamma^0) \|_2^2  + 2 \lambda \|\hat{\gamma}-\gamma^0\|_1 
&\leq 2 \left(\frac{7}{3}\lambda + \lambda_0\right) \|\hat{\gamma}-\gamma^0\|_1\\
&\leq 
\frac{1}{2}\Biggl(8\left(\frac{7}{3}\lambda + \lambda_0\right)^2(1 + C_0)^2 \frac{s_0}{\phi_0^2} + \frac{1}{T}\|X(\hat{\gamma}-\gamma^0)\|_2^2\Biggr),
\end{align*}
such that
\begin{align*}
\frac{1}{T}\|X(\hat{\gamma}-\gamma^0) \|_2^2  + 4 \lambda \|\hat{\gamma}-\gamma^0\|_1 
&\leq 8\left(\frac{7}{3}\lambda + \lambda_0\right)^2(1 + C_0)^2 \frac{s_0}{\phi_0^2}.
\end{align*}

Finally, we show that $\hat\gamma - \gamma^0 \in A$ in Cases 1 and 3.

Case 1: $\lambda_2 \frac{1}{\sqrt{T}} \|X(\hat{\gamma}-\gamma^0)\|_2 \leq \lambda_0  \|\hat{\gamma}-\gamma^0\|_1$. From
the Basic Inequality
\begin{align*}
  &\frac{1}{T}\|X(\hat{\gamma}-\gamma^0) \|_2^2  +  \lambda \|\hat{\gamma}\|_{1,T} \leq 
  \lambda_0 \|\hat{\gamma}-\gamma^0\|_1  + \lambda_2 \frac{1}{\sqrt{T}}\|X(\hat{\gamma}-\gamma^0)\|_2  + \lambda  \|\gamma^0\|_{1,T}.
\end{align*}
Using the identity $\|\hat{\gamma}\|_{1,T} = \|\hat{\gamma}_{S^c}\|_{1,T} + \|\hat{\gamma}_S\|_{1,T}$, and inequalities $\|\hat{\gamma}\|_{1,T} - \|\gamma^0\|_{1,T} \leq \|\hat{\gamma}- \gamma^0\|_{1,T}$ and 
$\frac{3}{4}\|\gamma\|_{1,T} \leq \|\gamma\|_{1} \leq \frac{5}{4}\|\gamma\|_{1,T}$, where the former follows by the Triangle Inequality and the latter by Condition \ref{cond:Variances}, we obtain
\begin{align*}
  \frac{1}{T}\|X(\hat{\gamma}-\gamma^0) \|_2^2  +  \lambda \|\hat{\gamma}_{S^c}\|_{1,T} &\leq 
  2\lambda_0 \|\hat{\gamma}-\gamma^0\|_1 + \lambda  \|\hat{\gamma}_{S}-\gamma^0_{S}\|_{1,T} \\
  &\implies
  \\
  \frac{4}{5} \lambda \|\hat{\gamma}_{S^c}\|_1 &\leq 
   \left(\frac{4}{3}\lambda+2\lambda_0\right) \|\hat{\gamma}_S - \gamma^0_S \|_1 + 2\lambda_0 \|\hat{\gamma}_{S^c}\|_1.
\end{align*}
Hence,
\begin{align*}
  \|\hat{\gamma}_{S^c}\|_{1} &\leq 
\frac{\frac{4}{3}\lambda +2\lambda_0}{\frac{4}{5}\lambda - 2\lambda_0} \|\hat{\gamma}_S - \gamma^0_S \|_1.
\end{align*}
Case 3:
$\lambda_2 \frac{1}{\sqrt{T}} \|X(\hat{\gamma}-\gamma^0)\|_2 > \lambda_0 \|\hat{\gamma}-\gamma^0\|_1$ and $2 \lambda_2 < \frac{1}{\sqrt{T}} \|X(\hat{\gamma}-\gamma^0)\|_2$. Then,
\begin{align*}
  &\frac{1}{2T}\|X(\hat{\gamma}-\gamma^0) \|_2^2  +  \lambda \|\hat{\gamma}\|_{1,T} \leq 
  \lambda_0 \|\hat{\gamma}-\gamma^0\|_1  + \lambda  \|\gamma^0\|_{1,T}.
\end{align*}
Hence,
\begin{align*}
  &\lambda \|\hat{\gamma}_{S^c}\|_{1,T} \leq 
  \lambda_0 \|\hat{\gamma}-\gamma^0\|_1  + \lambda  (\|\gamma^0_S\|_{1,T} -  \|\hat{\gamma}_S\|_{1,T})\\
  \implies\\
    &\frac{4}{5}\lambda \|\hat{\gamma}_{S^c}\|_{1} \leq 
  \lambda_0 (\|\gamma^0_S-\hat{\gamma}_S\|_1 + \|\hat{\gamma}_{S^c}\|_1)  + \frac{4}{3}\lambda \|\gamma^0_S - \hat{\gamma}_S\|_1\\
  \implies\\
    &\|\hat{\gamma}_{S^c}\|_{1} \leq 
  \frac{\lambda_0 + \frac{4}{3}\lambda}{\frac{4}{5}\lambda - \lambda_0}\|\gamma^0_S-\hat{\gamma}_S\|_1.
\end{align*}

If $0 \leq \lambda_0 < \frac{4}{10} \lambda$, it follows $\frac{\lambda_0 + \frac{4}{3}\lambda}{\frac{4}{5}\lambda - \lambda_0} \leq \frac{2\lambda_0 + \frac{4}{3}\lambda}{\frac{4}{5}\lambda - 2\lambda_0}$, such that the choice 
\[\lambda_0 \leq \frac{\lambda}{2(1+C_0)}\left(\frac{4}{5}C_0 - \frac{4}{3}\right)\] 
%\[C_0 = \frac{1}{2(1+C_0)}\left(\frac{4}{5}C_0 - \frac{4}{3}\right)\] 
implies $\hat\gamma - \gamma_0 \in A$, for any $C_0 \geq 5/3$ such that $\lambda_0 \geq 0$. It remains to show, for any $C_0 \geq 5/3$,
\[\frac{1}{2(1+C_0)}\left(\frac{4}{5}C_0 - \frac{4}{3}\right) < \frac{4}{10}.\]
Indeed, rearranging gives 
\[-\left(\frac{8}{10}+\frac{4}{3}\right)  < \left(\frac{8}{10} - \frac{4}{5}\right)C_0 = 0,\]
such that the inequality is satisfied.

\qed

\subsection{Proof of Lemma \ref{lem:nonasymp-prob}}\label{proof:nonasymp-prob}
Under Condition \ref{cond:mixing} it follows
% it follows that $\{(\varepsilon_t,X_t')'\}$ is strictly stationary and $\beta$-mixing with equivalent coefficient, since $\varepsilon_t$ is a measurable function of $(Y_t,X_t')'$. Similarly, 
$\{X_{it}X_{jt}\}$ and $\{X_{it}\varepsilon_t\}$ are strictly stationary and mixing with coefficients bounded by $\bar\beta(t)$, for each $i,j\in[p]$.

We first consider event $\mathcal{S}$. Note, by the Mean Value Theorem, for all $i \in [p]$,
$\left|\hat{\sigma}_i - \sigma_i\right| = \left|\sqrt{\hat{\sigma}_i^2} - \sqrt{\sigma_i^2}\right| \leq \left|\tilde{\sigma}_i^{-1}\right| \left|\hat{\sigma}_i^2 - \sigma_i^2
\right| \leq C \left|\hat{\sigma}_i^2 - \sigma_i^2\right|$, for some finite constant $C>0$, where the final inequality follows by $\tilde{\sigma}_i \in (\sigma_i \wedge \hat\sigma_i, \sigma_i \vee \sigma_i)$ such that $\tilde{\sigma}_i\in(0,\infty)$.

Letting $w_t = (X_{1t}^2,\ldots,X_{pt}^2)'$, we obtain from Lemma \ref{lem:fuknagaev}
\begin{align*}
P(\max_{i\in[p]}\left|\hat{\sigma}_i - \sigma_i\right| > \frac{1}{4}) = 
3 p a_T \Biggl(
\frac{C_q^{(1)}}{d_T^{q-1}}
+ 2 \exp\bigl(-C_q^{(2)}d_T\bigr)
\Biggr) + 
2pd_T\bar\beta(a_T).
\end{align*}

Then, consider event $\mathcal{T}$. Letting $w_t = X_t\varepsilon_t$, we obtain from Lemma \ref{lem:fuknagaev}, for some finite constant $C>0$,
\begin{align*}
&P( \max_{i\in[p]} \frac{2}{T}\left|\varepsilon'X_j \right|  > \lambda_0) \\
&\leq P( \max_{i\in[p]} \left|\varepsilon'X_j \right|  > CT \lambda)\\
&\leq 
3 p a_T \Biggl(
\frac{C_q^{(1)}}{\lambda^q d_T^{q-1}}
+  \exp\bigl(-C_q^{(2)}\lambda^2 d_T\bigr)
\Biggr) + 
2pd_T\bar\beta(a_T).
\end{align*}

Finally, we consider event $\mathcal{U}$. Letting $w_t = \text{vec}({X_t X_t'})$, we obtain from Lemma \ref{lem:fuknagaev}, for some finite constant $C>0$,

\begin{align*}
&P\left( \max_{i,j\in [p]} \left| \frac{1}{T}\sum_{t=1}^TX_{it} X_{jt} - E[X_{it} X_{jt}] \right|  > \lambda_1\right) \\
& \leq P\left( \max_{i,j\in [p]} \left| \sum_{t=1}^TX_{it} X_{jt} - E[X_{it} X_{jt}] \right|  > TC/s_0 \right) \\
&\leq 
3 p^2 a_T \Biggl(
\frac{C_q^{(1)}s_0^q}{ d_T^{q-1}}
+  \exp\bigl(-C_q^{(2)} d_T/s_0^2\bigr)
\Biggr) + 
2p^2d_T\bar\beta(a_T).
\end{align*}

The final result follows by the union bound.
\qed

\subsection{Proof of Lemma \ref{lem:augregress-quantile-bound}}\label{proof:augregress-quantile-bound}
First, note 
\begin{align*}
\|\hat{Y} - \tilde{Y}\|_2^2 &= \sum_{t=1}^T |\xi_\tau(Y_t,Q_t) - \xi_\tau(Y_t,\hat{Q}_t)|^2,
\end{align*}
with $\xi_\tau(y,q) := q - \frac{1}{\tau}\mathds{1}(y < q)(y-q)$. 

We may use the inequality $|\xi_\tau(y,q) - \xi_\tau(y,q')| \leq \left(1+\frac{1}{\tau}\right) |q-q'|$, which holds for each $(y,q,q')\in\mathbb{R}^3$. Hence, we obtain the result
\begin{align*}
\|\hat{Y} - \tilde{Y}\|_2^2 &\leq \sum_{t=1}^T \left(1+\frac{1}{\tau}\right)^2\left|Q_t-\hat{Q}_t\right|^2
% &\leq \left(1+\frac{1}{\tau}\right)^2 \left(\sum_{t=1}^T \left|Q_t-\hat{Q}_t\right| + \sum_{t=1}^T \left|Q_t-\hat{Q}_t\right|^2\right)\\
= \left(1+\frac{1}{\tau}\right)^2 \|\hat{Q} - Q\|_2^2.
\end{align*}

Finally, we verify the inequality $|\xi_\tau(y,q) - \xi_\tau(y,q')| \leq \left(1+\frac{1}{\tau}\right) |q-q'|$, for each $(y,q,q')\in\mathbb{R}^3$ by considering four cases that encompass all orderings of $(y,q,q')$. Case 1: If $y < q$ and $y < q'$, then $|\xi_\tau(y,q) - \xi_\tau(y,q')| = |q-q' -\frac{1}{\tau}(q-q')| \leq \left(1+\frac{1}{\tau}\right) |q-q'|$; Case 2: If $q < y \leq q'$, then $|\xi_\tau(y,q) - \xi_\tau(y,q')| = |q-q' +\frac{1}{\tau}(y-q')| \leq |q-q'| + \frac{1}{\tau}|y-q'| \leq \left(1+\frac{1}{\tau}\right) |q-q'|$, where the first inequality follows by the Triangle Inequality and the second inequality from the ordering $q<y<q'$; Case 3: If $q < y \leq q'$, then $|\xi_\tau(y,q) - \xi_\tau(y,q')| = |q-q' - \frac{1}{\tau}(y-q)| \leq |\left(1+\frac{1}{\tau}\right) |q-q'|$, by similar steps as in Case 2; Case 4: If  $q \leq y$ and $q' \leq y$, then then $|\xi_\tau(y,q) - \xi_\tau(y,q')| = |q-q'| \leq |\left(1+\frac{1}{\tau}\right) |q-q'|$.

\qed

\subsection{Proof of Lemma \ref{lem:quant-consistency}}\label{proof:quant-consistency}

Under the imposed conditions, the Assumptions of Theorem 4.2 in BCMPW are satisfied, which is shown at the end of this proof. 

Then, notice
\begin{align*}
\|\hat{Q} - Q\|_2^2/T 
&= \frac{1}{T}\sum_{t=1}^T \left(X_t'(\hat\alpha - \alpha^0)\right)^2\\
&= (\hat\alpha - \alpha^0)'  \frac{1}{T}\sum_{t=1}^T X_t X_t' (\hat\alpha - \alpha^0)\\
&= (\hat\alpha - \alpha^0)'  \hat\Sigma (\hat\alpha - \alpha^0).
\end{align*}

We may employ the inequality $\alpha'\hat\Sigma\alpha \leq C_1 \alpha'\Sigma \alpha$, for any $\alpha \in A_{\alpha}$, which holds for some universal constant $C_1 > 0$ with probability approaching one, as shown at the end of this proof. It follows that we may bound the RHS (with probability approaching one) by
\begin{align*}
C_1 (\hat\alpha - \alpha^0)'  \Sigma (\hat\alpha - \alpha^0)
= \frac{C_1}{\underline{f}} J_\tau(\hat\alpha - \alpha^0),
\end{align*}
since $\hat\alpha - \alpha^0 \in A_\alpha$ with probability approaching one by Lemma S4 in BCMPW and
where we recall the definition of $J_{\tau}(\cdot)$ in Section \ref{appx:belloni}.

Finally, in the proof to their Theorem 4.2, BCMPW establish 
\[J_\tau(\hat\alpha - \alpha^0)^{1/2} \lesssim
\frac{\psi_T \sqrt{(1+s_{\alpha,0})a_T\log(pa_T)}}{\sqrt{d_T} \kappa_0 \underline{f}^{1/2}},
\]
with probability approaching one, such that 
\begin{align*}
\|\hat{Q} - Q\|_2^2/T = O_P\left(
\frac{\psi_T^2 (1+s_{\alpha,0})a_T\log(pa_T)}{d_T\kappa_0^2 \underline{f}^2}
\right),
\end{align*}
and hence $\|\hat{Q} - Q\|_2^2/T = o_P(1)$ under the rate in Condition \ref{cond:BCMPW}.

We now establish the inequality $\alpha'\hat\Sigma\alpha \leq C_1 \alpha'\Sigma \alpha$, for any $\alpha \in A_{\alpha}$, holds for some universal constant $C_1 > 0$ with probability approaching one. As in the proof of Lemma \ref{lem:nonasymp-bound}, for any $\alpha \in A_\alpha$, under the Condition \ref{cond:BCMPW} and on $\mathcal{U}$,
\begin{align*}
\left|\frac{\alpha'\hat\Sigma\alpha}{\alpha'\Sigma\alpha} - 1\right|
\leq 
\frac{1}{\alpha'\Sigma\alpha} \left|\alpha'(\hat\Sigma - \Sigma)\alpha \right|
\leq 
\frac{1}{\alpha'\Sigma\alpha} \left\|\alpha\right\|_1^2 \|\hat\Sigma - \Sigma\|_\infty
\leq 
\frac{(1+C_0)^2\underline{f}s_0}{\kappa_0^2} \lambda_1 \leq \frac{1}{2}\underline{f} \leq \frac{1}{2}.
\end{align*}
such that $\gamma'\Sigma\gamma \leq 2 \gamma'\hat\Sigma\gamma$ with probability approaching one if $P(\mathcal{U}) \rightarrow 1$ for $\lambda_1 \leq \frac{\kappa_0^2}{2 (1+C_0)^2 \underline{f} s_0}$. From similar steps to those in the proof of Lemma \ref{lem:asymp}, we obtain \[P(\mathcal{U}) \leq 3 p^2 a_T \Biggl(
\frac{C_q^{(1)} \underline{f}^q s_0^q}{ \kappa_0^{2q} d_T^{q-1}}
+  \exp\bigl(-C_q^{(2)}\kappa_0^4 d_T/(s_0^2\underline{f}^2)\bigr)
\Biggr) + 
2p^2d_T\bar\beta(a_T),\]
such that the RHS converges to zero under the rates imposed in Condition \ref{cond:BCMPW}.

Finally, we show the Assumptions 2-4 in BCMPW imposed in their Theorem 4.2 are satisfied. Assumption 2 is satisfied under Conditions \ref{cond:Variances}, \ref{cond:mixing}, \ref{cond:moments}, and \ref{cond:rates}; see Lemma \ref{lem:asymp}. Assumption 3 is satified since we can decompose $Y_t$ as $Y_t = X_t'\alpha^0(\tau) + X_t'\left(\alpha^0(U_t) - \alpha^0(\tau)\right)$. Defining $\epsilon_t = X_t'\left(\alpha^0(U_t) - \alpha^0(\tau)\right)$ it is obvious that $P(\epsilon_t \leq 0 | X_t) = \tau$ from the definition of the vector functional $\alpha^0(\cdot)$. Indeed, by monotonicity of $X_t'\alpha^0(\tau)$ in $\tau$ (a.s.) we obtain,
\begin{align*}
P(\epsilon_t \leq 0 | X_t) &= P\left(X_t'\alpha^0(U_t) \leq X_t'\alpha^0(\tau) | X_t \right) = P\left(U_t \leq \tau \right) = \tau.
\end{align*}
Moreover, given the measurability of $\alpha^0(\cdot)$ and the $U_t$ being iid, it follows $\{\alpha^0(U_t)\}$ is mixing of any rate and therefore $\{Y_t\}$ and $\{\epsilon_t\}$ are mixing of rate $\overline{\beta}(t)$ by Theorem 3.49 in \citet{white2001asymptotic}. Hence, by another application of Theorem 3.49, $\{(X_t,\epsilon_t)\}$ is stationary and $\beta$-mixing of the same rate. Finally, the requirements on the conditional density are reproduced in Condition \ref{cond:BCMPW} in an equivalent form for strictly stationary data.
Assumption 4 is reproduced in Condition \ref{cond:BCMPW} for $m=0$. The bound on $q$ in the statement of Theorem 4.2 is also included.
\qed 

\subsection{Proof of Lemma \ref{lem:fuknagaev}}\label{proof:fuknagaev}

Let $\{\zeta_1,\ldots,\zeta_T\}$ be a sequence with independent blocks, but within blocks identically distributed to the corresponding block of $\{w_1,\ldots,w_T\}$, i.e. \[
\mathcal{L}(\zeta_1,\ldots,\zeta_T) = \mathcal{L}(w_1,\ldots,w_{a_T}) \times \mathcal{L}(w_{a_T+1},\ldots,w_{2a_T}) \times \cdots,\]
where $\mathcal{L}$ denotes the respective law. 

% The result by Eberlein (YEAR) (Equation (2.6) in Arcones \& Yu, 1994, NOTE: fix citation), implies, that for any measurable set $A$, 
% \begin{align*}
% &\Biggl|P\biggl((\zeta_1,\ldots,\zeta_{a_T},\zeta_{2a_T+1},\ldots,\zeta_{3a_T},\ldots,\zeta_{2(d_T-1)+a_T},\ldots,\zeta_{2d_T a_T}) \in A\biggr)
% \\
% &\qquad
% -
% P\biggl((w_1,\ldots,w_{a_T},w_{2a_T+1},\ldots,w_{3a_T},\ldots,w_{2(d_T-1)+1a_T},\ldots,w_{2d_T a_T}) \in A\biggr) \Biggr|\\
% &\quad \leq 2(d_T-1)\alpha_{a_T}.
% \end{align*}

It follows
\begin{align*}
P(\frac{1}{T}|\sum_{t=1}^T w_t| > u) 
&\leq 
P(\frac{1}{T} |\sum_{j=1}^{d_T}\sum_{t\in H_j \cup G_j} w_t | > u/2) + P(\frac{1}{T}|\sum_{t\in Q} w_t| > u/2)\\
&= 
2P(\frac{1}{T} |\sum_{j=1}^{d_T}\sum_{t\in H_j} w_t| > u/4) + P(\frac{1}{T} |\sum_{t\in Q} w_t| > u/2)\\
&\leq 
2P(\frac{1}{T} |\sum_{j=1}^{d_T}\sum_{t\in H_j} \zeta_t| > u/4) + P(\frac{1}{T} |\sum_{t\in Q} \zeta_t| > u/2) + 2d_T\beta(a_T),
\end{align*}
where the first inequality follows by the union bound, and the second inequality by Lemma 4.1 and Lemma 4.2 in \citet{yu1994rates}.

Notice
\begin{align*}
P(\frac{1}{T}|\sum_{j=1}^{d_T}\sum_{t\in H_j} \zeta_t| > u/4)
&\leq 
P(\frac{a_T}{T}\max_{1\leq l \leq a_T}|\sum_{j=1}^{d_T} \zeta_{2(j-1)a_T + l}| > u/4)\\
&\leq 
a_T \max_{1\leq l \leq a_T} P(\frac{a_T}{T} |\sum_{j=1}^{d_T} \zeta_{2(j-1)a_T + l}| > u/4),
\end{align*}
and, similarly, 
\begin{align*}
P(\frac{1}{T}|\sum_{t\in Q} \zeta_t| > u/2)
&\leq 
P(\frac{a_T}{T} \max_{1\leq l \leq a_T}|\zeta_{2d_T a_T + l}| > u/2)\\
&\leq 
a_T \max_{1\leq l \leq a_T} P(\frac{a_T}{T} |\zeta_{2d_T a_T + l}| > u/2).
\end{align*}

% It follows
% \begin{align*}
% P(|\sum_{t=1}^T w_t| > u) 
% &\leq 
% P(|\sum_{j=1}^{d_T}\sum_{t\in H_j} w_t| > u/3) + P(|\sum_{j=1}^{d_T}\sum_{t\in G_j} w_t| > u/3) + P(|\sum_{t\in Q} w_t| > u/3)\\
% &= 
% 2P(|\sum_{j=1}^{d_T}\sum_{t\in H_j} w_t| > u/3) + P(|\sum_{t\in Q} w_t| > u/3)\\
% &\leq 
% 4P(|\sum_{j=1}^{d_T}\sum_{t\in H_j} \zeta_t| > u/12) + 2P(|\sum_{t\in Q} \zeta_t| > u/12) + 4d_T\beta(a_T),
% \end{align*}
% where the first inequality follows by the union bound, and the second inequality by Lemma 4.2 in [CITE: Yu1994].

% Notice,
% \begin{align*}
% P(|\sum_{j=1}^{d_T}\sum_{t\in H_j} \zeta_t| > u/12)
% &\leq 
% P(a_T \max_{1\leq l \leq a_T}|\sum_{j=1}^{d_T} \zeta_{2(j-1)a_T + l}| > u/12)\\
% &\leq 
% a_T \max_{1\leq l \leq a_T} P(a_T |\sum_{j=1}^{d_T} \zeta_{2(j-1)a_T + l}| > u/12),
% \end{align*}
% and, similarly, 
% \begin{align*}
% P(|\sum_{t\in Q} \zeta_t| > u/12)
% &\leq 
% P(a_T \max_{1\leq l \leq a_T}|\zeta_{2d_T a_T + l}| > u/12)\\
% &\leq 
% a_T \max_{1\leq l \leq a_T} P(a_T |\zeta_{2d_T a_T + l}| > u/12).
% \end{align*}

Finally, by the Fuk-Nagaev inequality in Corollary 4 in \citet{fuk1971probability}, for $\tilde{u} = \frac{ud_T}{4}$,
% \begin{align*}
% P(\frac{a_T}{T} |\sum_{j=1}^{d_T} \zeta_{2(j-1)a_T + l}| > u/4)
% &\leq P(\frac{1}{d_T}|\sum_{j=1}^{d_T} \zeta_{2(j-1)a_T + l}| > \tilde{u})\\
% &\leq 
% \frac{c_q^{(1)}}{\tilde{u}^q d_T^q} \sum_{j=1}^{d_T} E|\zeta_{2(j-1)a_T + l}|^q
% + 2 \exp(-\frac{c_q^{(2)}\tilde{u}^2d_T^2}{\sum_{j=1}^{d_T} E|\zeta_{2(j-1)a_T + l}|^2}) \\
% &=
% \frac{c_q^{(1)}}{\tilde{u}^q d_T^{q-1}} \frac{1}{d_T}\sum_{j=1}^{d_T} E|\zeta_{2(j-1)a_T + l}|^q
% + 2 \exp(-\frac{c_q^{(2)}\tilde{u}^2d_T}{\frac{1}{d_T}\sum_{j=1}^{d_T} E|\zeta_{2(j-1)a_T + l}|^2}) \\
% &=
% \frac{c_q^{(1)}}{\tilde{u}^q d_T^{q-1}} E|\zeta_{2(j-1)a_T + l}|^q
% + 2 \exp(-\frac{c_q^{(2)}\tilde{u}^2d_T}{E|\zeta_{2(j-1)a_T + l}|^2}) \\
% &=
% \frac{4^q T^q c_q^{(1)}}{u^q d_T^{q-1}} E|\zeta_{2(j-1)a_T + l}|^q
% + 2 \exp(-\frac{c_q^{(2)}u^2d_T}{16 T^2 E|\zeta_{2(j-1)a_T + l}|^2}),
% \end{align*}
\begin{align*}
P(\frac{a_T}{T} |\sum_{j=1}^{d_T} \zeta_{2(j-1)a_T + l}| > u/4)
&\leq P(|\sum_{j=1}^{d_T} \zeta_{2(j-1)a_T + l}| > \tilde{u})\\
&\leq 
\frac{c_q^{(1)}}{\tilde{u}^q} \sum_{j=1}^{d_T} E|\zeta_{2(j-1)a_T + l}|^q
+ \exp(-\frac{c_q^{(2)}\tilde{u}^2}{\sum_{j=1}^{d_T} E|\zeta_{2(j-1)a_T + l}|^2}) \\
% &=
% \frac{c_q^{(1)}}{\tilde{u}^q d_T^{q-1}} \frac{1}{d_T}\sum_{j=1}^{d_T} E|\zeta_{2(j-1)a_T + l}|^q
% + 2 \exp(-\frac{c_q^{(2)}\tilde{u}^2d_T}{\frac{1}{d_T}\sum_{j=1}^{d_T} E|\zeta_{2(j-1)a_T + l}|^2}) \\
% &=
% \frac{c_q^{(1)}}{\tilde{u}^q d_T^{q-1}} E|\zeta_{2(j-1)a_T + l}|^q
% + 2 \exp(-\frac{c_q^{(2)}\tilde{u}^2d_T}{E|\zeta_{2(j-1)a_T + l}|^2}) \\
&=
\frac{4^q c_q^{(1)}}{u^q d_T^{q-1}} E|\zeta_{2(j-1)a_T + l}|^q
+  \exp(-\frac{c_q^{(2)}u^2d_T}{16 E|\zeta_{2(j-1)a_T + l}|^2}),
\end{align*}
and, similarly, for $\bar{u} = \frac{ud_T}{2}$,
\begin{align*}
P(\frac{a_T}{T} |\zeta_{2d_Ta_T + l}| > u/2)
&\leq P(|\zeta_{2(j-1)a_T + l}| > \bar{u})\\
&\leq 
\frac{c_q^{(1)}}{\bar{u}^q } E|\zeta_{2(j-1)a_T + l}|^q
+  \exp(-\frac{c_q^{(2)}\bar{u}^2}{E|\zeta_{2(j-1)a_T + l}|^2}) \\
&=
\frac{2^q c_q^{(1)}}{u^q d_T^q } E|\zeta_{2(j-1)a_T + l}|^q
+  \exp(-\frac{c_q^{(2)}u^2d_T^2}{4 E|\zeta_{2(j-1)a_T + l}|^2}),
\end{align*}
% and, similarly, for $\bar{u} = \frac{uT}{2a_T}$,
% \begin{align*}
% P(\frac{a_T}{T} |\zeta_{2d_Ta_T + l}| > u/2)
% &= P(|\zeta_{2(j-1)a_T + l}| > \bar{u})\\
% &\leq 
% \frac{c_q^{(1)}}{\bar{u}^q } E|\zeta_{2(j-1)a_T + l}|^q
% + 2 \exp(-\frac{c_q^{(2)}\bar{u}^2}{E|\zeta_{2(j-1)a_T + l}|^2}) \\
% &=
% \frac{2^q a_T^q c_q^{(1)}}{u^q T^q } E|\zeta_{2(j-1)a_T + l}|^q
% + 2 \exp(-\frac{c_q^{(2)}u^2T^2}{4 a_T^2 E|\zeta_{2(j-1)a_T + l}|^2}),
% \end{align*}

Combining, and applying the union bound,
\begin{align*}
P(\max_{1\leq i \leq p}|\sum_{t=1}^T w_{t,i}| > u) 
&\leq 
2 p a_T \bigl(
\frac{4^q c_q^{(1)}}{u^q d_T^{q-1}} \max_{1\leq i \leq p}E|w_{1,i}|^q
+  \exp(-\frac{c_q^{(2)}u^2d_T}{16 \max_{1\leq i \leq p}E|w_{1,i}|^2})
\bigr)\\
&\quad +
p a_T  \bigl(
\frac{2^q c_q^{(1)}}{u^qd_T^q} \max_{1\leq i \leq p}E|w_{1,i}|^q
+ \exp(-\frac{c_q^{(2)}u^2 d_T^2}{4 \max_{1\leq i \leq p}E|w_{1,i}|^2})
\bigr)\\
&\quad + 
2pd_T\beta(a_T),
\end{align*}

such that, for some positive constants $C_q^{(1)}$ and  $C_q^{(2)}$, it holds
\begin{align*}
P(\max_{1\leq i \leq p}|\frac{1}{T}\sum_{t=1}^T w_{t,i}| > u) 
&\leq 
3 p a_T \Biggl(
\frac{C_q^{(1)}}{u^q d_T^{q-1}}
+  \exp\biggl(-C_q^{(2)} u^2d_T\biggr)
\Biggr) + 
2pd_T\beta(a_T),
\end{align*}

Hence, there exist uniform constants $C_q^{(3)},C_q^{(4)}>0$, such that, for any $\delta \in (0,1)$,
% \begin{align*}
% u^2 = \wedge \frac{144 T^2 \max_{1\leq i \leq p} E|w_{1,i}|^2}{c_q^{(2)}}a_T (\log(CONSTANT pa_T/\delta))\\
% u^q = \frac{\delta 4pa_T 12^q T^q c_q^{(1)} \max_{1\leq i \leq p} E|w_{1,i}|^q}{d_T^{q-1}}
% \end{align*}
\begin{align*}
u &= C_q^{(3)} \frac{( pa_T\delta)^{1/q}}{d_T^{(q-1)/q}}\wedge C_q^{(4)} \frac{\sqrt{\log(pa_T/\delta)}}{\sqrt{d_T}}
% &\geq \frac{12T (4c_q^{(1)}\delta pa_T \max_{1\leq i \leq p} E|w_{1,i}|^q )^{1/q}}{d_T^{(q-1)/q}}
% \wedge \frac{12 T \sqrt{\max_{1\leq i \leq p} E|w_{1,i}|^2}}{\sqrt{c_q^{(2)}d_T}}\sqrt{\log(8pa_T/\delta)}\\ 
\end{align*}
it follows 
\begin{align*}
P(\max_{1\leq i \leq p}|\frac{1}{T}\sum_{t=1}^T w_{t,i}| > u) 
&\leq \delta + 2p d_T\beta(a_T).
\end{align*}\qed
% ensure all floats are processed
\processdelayedfloats

% \pagebreak

% \begin{center}
% \textbf{\Large Supplementary Materials: Title }\\
% % Author: Sander Barendse\\
% \thedate
% \end{center}

% \beginsupplement

% \section{Section}

\end{document}